\newtheorem{prop}{Proposition}
\newtheorem{theorem}{Theorem}
\newtheorem{remark}{Remark}
\newcommand{\R}{\mathcal{R}}
\begin{document}
	\title{Rate Splitting with Wireless Edge Caching:\\A System-Level-based Co-design}
	\author{\IEEEauthorblockN{Eleni Demarchou, \IEEEmembership{Student Member, IEEE}, Constantinos Psomas, \IEEEmembership{Senior Member, IEEE}, and Ioannis Krikidis, \IEEEmembership{Fellow, IEEE}}
	\thanks{E. Demarchou, C. Psomas, and I. Krikidis are with the IRIDA Research Centre for Communication Technologies, Department of Electrical and Computer Engineering, University of Cyprus, Cyprus (e-mail: \{edemar01, psomas, krikidis\}@ucy.ac.cy). Preliminary results of this work have been presented at the IEEE International Conference on Communications (ICC) Workshops, Jun., 2020 \cite{rs}.}}
	\maketitle
\begin{abstract}
Rate splitting (RS) and wireless edge caching are essential means for meeting the quality of service requirements of future wireless networks. In this work, we focus on the cross-layer co-design of wireless edge caching schemes with sophisticated physical layer techniques, which facilitate non-orthogonal multiple access and interference mitigation. A flexible caching-aided RS (CRS) technique is proposed that operates in various modes that specify the cache placement at the receivers. We consider two caching policies: the intelligent coded caching (CC), as well as the well-known most popular content (MPC) policy. Both caching policies are integrated within the design parameters of RS in order to serve multiple cache-enabled receivers. The proposed technique is investigated from a system level perspective by taking into account spatial randomness. We consider a single cell network consisting of center and edge receivers and provide a comprehensive analytical framework for the evaluation of the proposed technique in terms of achieved rates. Specifically, we derive the rate achieved at each receiver under minimum rate constraints while incorporating the cache placement characteristics. Numerical results are presented which highlight the flexibility of the proposed technique and show how caching can be exploited in order to further boost the performance of RS. 
\end{abstract}
\begin{IEEEkeywords}
Wireless edge caching, coded caching, rate splitting, spatial randomness, stochastic geometry.
\end{IEEEkeywords}

\vspace{-2mm}
\section{Introduction}
With the rapid evolution of the Internet of Things (IoT) and the development of machine to machine (M2M) networks, cellular subscriptions are proliferated, yielding to an ever increasing data traffic \cite{IoT}, while video demands have already become more prevalent. According to Cisco, by 2022 nearly 80\% of the mobile data traffic will be due to video streaming \cite{CISCO}. As such, the foreseen wireless networks are expected to serve a massive number of subscribers requiring more data at a higher rate and lower latency \cite{6G}. In order to meet the upcoming increase in spectral efficiency requirements, wireless edge caching is deemed as an essential solution to enhance the content delivery \cite{dimakis}. Specifically, by exploiting the popularity of the network requests and the low cost of storage equipment, the most frequently requested data can be pre-fetched and locally stored at the access points {\cite{SDN}}, or even at the end users cache \cite{cost1}. In this way, a lower playback latency can be attained while offloading the backhaul traffic in a cost efficient manner \cite{cost}. Even though the concept of edge caching has been proposed as a solution to high video traffic, it can be applied to any frequently requested data. The main challenge is to address the appropriate caching policy by taking into account the network architecture \cite{challanges}.

A very well-known scheme for cache placement is the most popular content (MPC) policy, where the available cache is filled with the most popular files of the network \cite{mag}. In \cite{MPC}, the authors consider the MPC policy applied at each small base station (BS) showing the impact of the BS density and their storage size in terms of outage probability and delivery rate. The benefits of the MPC policy have been further exploited in terms of cooperative transmissions. In particular, the work in \cite{letr}, considers a cooperative cache-enabled relay system, and proposes a hybrid caching policy that achieves a balance between the signal cooperation gain achieved by the MPC and the largest content diversity (LCD) gain by caching different files at each relay node. Aiming to achieve a high hit probability, the works in \cite{Probabilistic} and \cite{demarchou} suggest to follow a probabilistic caching placement and randomly store files from the file library by taking into account the files' popularity.

While the aforementioned caching policies, such as MPC, require to store the entire files, another approach of cache placement is the coded caching (CC), where partitions of the files are stored instead \cite{bookCC}. {The CC policy has been introduced by M. A. Maddah-Ali and U. Niesen in \cite{MAN}, and it suggests an intelligently designed cache placement dedicated to a set of receivers and takes into consideration the delivery phase.} Specifically, partitions of the files are carefully stored among the receivers, such that the cache placement can be exploited during the transmission for employing linear network coding. In this way, the coded signal serves simultaneously multiple receivers with a reduced transmitted load. In particular, it is shown that this policy is optimal for minimizing the worst-case load \cite{optimality}. {The gains brought by CC have been investigated in different network deployments including device-to-device networks \cite{caired2d} and multiple antennas setups \cite{CaireMISO}, \cite{EliaMISO} where CC increases the degrees of freedom \cite{EliaMISO}. Moreover, besides caching at the receivers' side, CC has been also exploited in networks with shared caches \cite{pelia} and has been applied for cache-enabled transmitters enabling in this way cooperative transmissions during the delivery phase \cite{CCTx}.} Due to its efficiency, it has been proposed as a key technology for meeting the forthcoming demands in multimedia traffic while obtaining high throughput \cite{6GCC}. Nonetheless, the gains of CC from a system level perspective have not yet been reported.

With the available resources straining out, the development of M2M networks overlaid with the cellular network imposes the need for multiple access schemes that manage to simultaneously offer service to several subscribers. Throughout the evolution of communication systems, various orthogonal multiple access (OMA) schemes were applied consisting orthogonality in several dimensions e.g., time division multiple access, while the current fourth generation systems have adopted orthogonal frequency division multiple access. However, requirements for future networks necessitate the employment of multiple access schemes which are non-orthogonal designed  \cite{6G}, in order to enable concurrent access of the channel resources. Recently, both industry and academia have conducted fundamental efforts on the power domain non-orthogonal multiple access (NOMA). This scheme employs superposition coding in order to serve two receivers simultaneously and successive interference cancellation (SIC) is applied at the receiver with the best channel statistics \cite{jsac}. NOMA with an appropriate power allocation has been proved to improve the performance of OMA in terms of the achievable sum rate \cite{Ding}. Furthermore, it has been shown that the NOMA scheme significantly improves the performance of the user with the worst channel conditions, hence further outperforming OMA in terms of fairness \cite{krikidis}. Due to its superiority against OMA, NOMA has been proposed for the 3GPP-LTE (as multi user superposition transmission) \cite{MUST} and is a promising scheme for future wireless communications. Moreover, applications of NOMA in cache-enabled networks have been studied extensively \cite{cNOMA1}-\cite{cNOMA4} with the available cache being exploited for interference cancellation, hence improving the performance of the network \cite{cNOMA3}, \cite{cNOMA4}. 

The NOMA scheme, is a special case of the so-called rate splitting (RS); a multiple access scheme which consists of an extra degree of freedom in the power domain and allows partial interference cancellation at both receivers. The concept of RS was first introduced in \cite{HANKOBA} and even though the scheme is known for a while, recent studies show the benefits and its applications in different multiple antennas setups by optimizing the beamforming design. Specifically, the authors in \cite{spectral} consider a multiple-input single-output (MISO) broadcast channel and show that RS provides a smooth transition between space division multiple access (SDMA) and NOMA and outperforms them in terms of spectral efficiency with a lower computational complexity. Furthermore, the MISO system with bounded channel state information errors at the transmitter is considered in \cite{maxmin}, where the authors present the gains of RS in terms of max-min fairness. This objective is also investigated in \cite{beamforming}, where the authors consider transmit beamforming in multiple multicast groups and present the benefits of RS through a degrees of freedom analysis. A similar analysis is presented in \cite{BrClNOMA} where the authors show that in a MISO system, NOMA never outperforms RS in terms of sum multiplexing gain and max-mix fair multiplexing gain. Additionally, the enhancements of RS in terms of spectral and energy efficient are also shown in \cite{multicast}, where the authors study an RS-assisted non-orthogonal unicast and multicast transmission system. {Moreover, the RS scheme has been investigated in cache-enabled networks. Specifically, the authors in \cite{RSCC1}, exploit the CC policy and spatial multiplexing gains in a MISO setup and show how this interplay can enhance the delivery time and channel state information quality requirements. A similar setup is also investigated in \cite{RSCC2}, where a generalized degrees of freedom analysis is derived by taking into account both centralized and decentralized placements for cache-enabled users employing the CC policy. Although the current literature sheds light on the efficiency of the RS, its performance in cache-enabled networks, from a system-level perspective is missing from the literature.}

Motivated by the above, in this work, we propose a cross-layer caching-aided RS (CRS) technique which integrates the CC placement with the sophisticated physical layer scheme RS. {In particular, under the CRS technique, the RS scheme is employed by the the transmitter in order to simultaneously serve multiple cache-enabled receivers whilst exploiting the properties of cache placement.} The proposed technique is investigated for a basic single cell downlink network consisting of two classes of receivers; center and edge receivers. By taking into account spatial randomness we provide a rigorous mathematical framework to analyze the performance of the receivers employing the CRS technique, in terms of achieved rates. To the best of our knowledge, this is the first system level analysis that considers the co-design of CC and RS. In particular, the main contributions of this paper are summarized as follows:
\begin{itemize}
	\item We present a communication technique that employs RS and operates in four modes, which define the caching placement at each class of receivers. We consider the CC caching policy as well as an entire-file-based placement i.e., MPC. Each mode of operation implements specific communication techniques, based on the receivers' file requests. In particular, a linear network coding may be applied in the case of CC placement, whereas the cache information can be utilized for interference cancellation, in the case of MPC. {As such, while the properties and applications of MPC or other entire-file placements have been extensively presented in the literature \cite{mag}-\cite{demarchou}, by employing the CRS technique, they can be exploited for mitigating interference.} This request-driven approach is flexible and exploits the benefits of each caching placement under the employment of RS. 
	
	\item The performance of our proposed technique is evaluated by following a probabilistic approach and by considering a fixed power allocation for RS. By using tools from stochastic geometry, we provide closed-form expressions for the distributions of the signal-to-interference-plus-noise ratio (SINR) at the receivers. Moreover, by taking into account minimum rate constraints and, based on the channel statistics, analytical expressions for the achieved rates are provided. We show how the caching placements, integrated within the design parameters of RS, boost the performance of the receivers. Finally, we provide an asymptotic case study, in terms of the transmit power, for the rates achieved at the receivers and indicate how critical the choice of the design parameters is.
	
	\item Numerical results are presented, which validate our analysis and illustrate how the main system parameters affect the performance of the network. We show how the proposed communication technique affects the rate of a receiver, depending on the operating mode and the receivers requests. In particular, the impact of the power allocation on the rates achieved at the receivers is discussed while we demonstrate how this is affected by the caching placements. Through our results, we highlight the performance gains brought by the mutual benefit realized by the co-design of the considered caching placements and RS.
\end{itemize}
The rest of the paper is organized as follows. Section \ref{system} presents the cache-enabled network model and our main assumptions. Section \ref{CRS} presents the proposed CRS technique we study and Section \ref{analysis} provides the analytical framework for the achieved rates. Finally, Section \ref{numerical} presents the numerical results and Section \ref{conclusion} concludes our work.

\underline{Notation:} $\mathbb{P}\left[X\right]$ represents the probability of the event $X$ with an expected value $\mathbb{E}\left[X\right]$; $\mathbf{1}(x)$ is an indicator function which gives $1$ if $x$ is true, otherwise gives $0$; $\big|A\big|$ denotes the cardinality of the set $A$ and $\min\{a,b\}$ returns the minimum value between $a$ and $b$; $a \choose b$ denotes the binomial coefficient; $\gamma(\cdot,\cdot)$ and $\Gamma(\cdot)$ denote the lower incomplete and complete Gamma functions, respectively. Finally, we define $\int_{a}^{b^{-}}f(x)\,dx=\lim_{y\to b}\int_{a}^{y}f(x)\,dx$. In Table \ref{table}, we provide the main notation used throughout the paper. 

\begin{table*}[t]
	\centering
	\caption{Summary of Notation}
	\label{table}
	\begin{tabular}{|l|l||l|l|}
		\hline
		\textbf{Notation} & \textbf{Description} & \textbf{Notation} & \textbf{Description} \\ \hline
		$K$ & Number of receivers in each class & $\omega$ & Pre-log factor \\ \hline
		$M$ & Receiver's cache size & $\zeta$ & SINR threshold for the common stream \\ \hline
		$N$ & Number of files cached in partitions & $\Xi(\omega_i)$ & SINR threshold for the private stream \\ \hline
		Index $n$ & $n$-th receiver, $n \in \{c,e\}$ & $u$ & Fraction of common rate allocation \\ \hline
		$s_0$ & Common stream & $\mathcal{R}^0_b$ & Common stream rate for both receivers \\ \hline
		$s_n$ & $n$-th receiver's private stream & $\mathcal{R}^0_{n}$ & Common stream rate of a single receiver \\ \hline
		$p_{x}$ & Power allocated to stream $s_x$ & $\mathcal{R}^{s_0}_n$ & $n$-th receiver's rate for the common stream \\ \hline
		$\beta$ & Power allocation factor for $s_0$ & $\mathcal{R}_n^{p}$ & $n$-th receiver's rate for the private stream \\ \hline
		$\rho$ & Power allocation factor for $s_{n}$ & $\mathcal{R}_n^{pI}$ & $n$-th receiver's rate for the private stream with interf. \\ \hline
		$\eta_n^x$ & $n$-th receiver's SINR, $x\in\{0,p,pI\}$ & $\mathcal{R}_n$ & $n$-th receiver's rate \\ \hline
		$\pi_{\eta}$ & Coverage probability of SINR $\eta$ & $\mathcal{R}_{sum}$ & Sum rate \\ \hline
		$g_{\eta}$ & PDF of SINR $\eta$ & Index IIC & Information based interference cancellation is applied \\ \hline
	\end{tabular}
\end{table*}
\section{System Model}\label{system}
\subsection{Channel Model}
Consider a single cell downlink network, where the transmitter is located at the center of the cell and has a constant transmit power $P$. Let $\mathcal{D}(r)$ denote a disk centered at the transmitter with radius $r$. A set of $2K$ cache-enabled receivers is located within the disk $\mathcal{D}(r_0)$. The receivers are further classified to center and edge receivers with each class consisting of $K$ receivers. The center receivers are randomly distributed within $\mathcal{D}(r_c)$ while the edge receivers are distributed within the annulus formed by the difference of disks $\mathcal{D}(r_e)$ and $\mathcal{D}(r_0)$, where $r_c<r_e<r_0$ \footnote{The considered topology establishes different channel conditions between the served receivers \cite{circles} i.e., the center receivers have better channel statistics with high probability. This is a common approach for employing the conventional NOMA scheme \cite{circles2}--\cite{rings}, while it is shown in \cite{rs} that under the considered topology, the RS scheme adds more flexibility against NOMA.}. {{The network's topology is depicted in Fig. \ref{topology}}}. Furthermore, we consider that all the wireless signals suffer from both large-scale path-loss and small-scale block Rayleigh fading. As such, the channel between the transmitter and the $i$-th receiver, $i \in \{1,\dots,2K\}$, is given by $L_i\triangleq h_i(1+d_i^\alpha)^{-1}$, where $h_i \sim \exp(1)$, $d_i$ denotes the distance of the $i$-th receiver from the transmitter and $\alpha$ is the path loss propagation exponent. Finally, we take into account additive white Gaussian noise with variance $\sigma^2$.
\begin{figure}[t]\centering
	\includegraphics[width=0.5\linewidth]{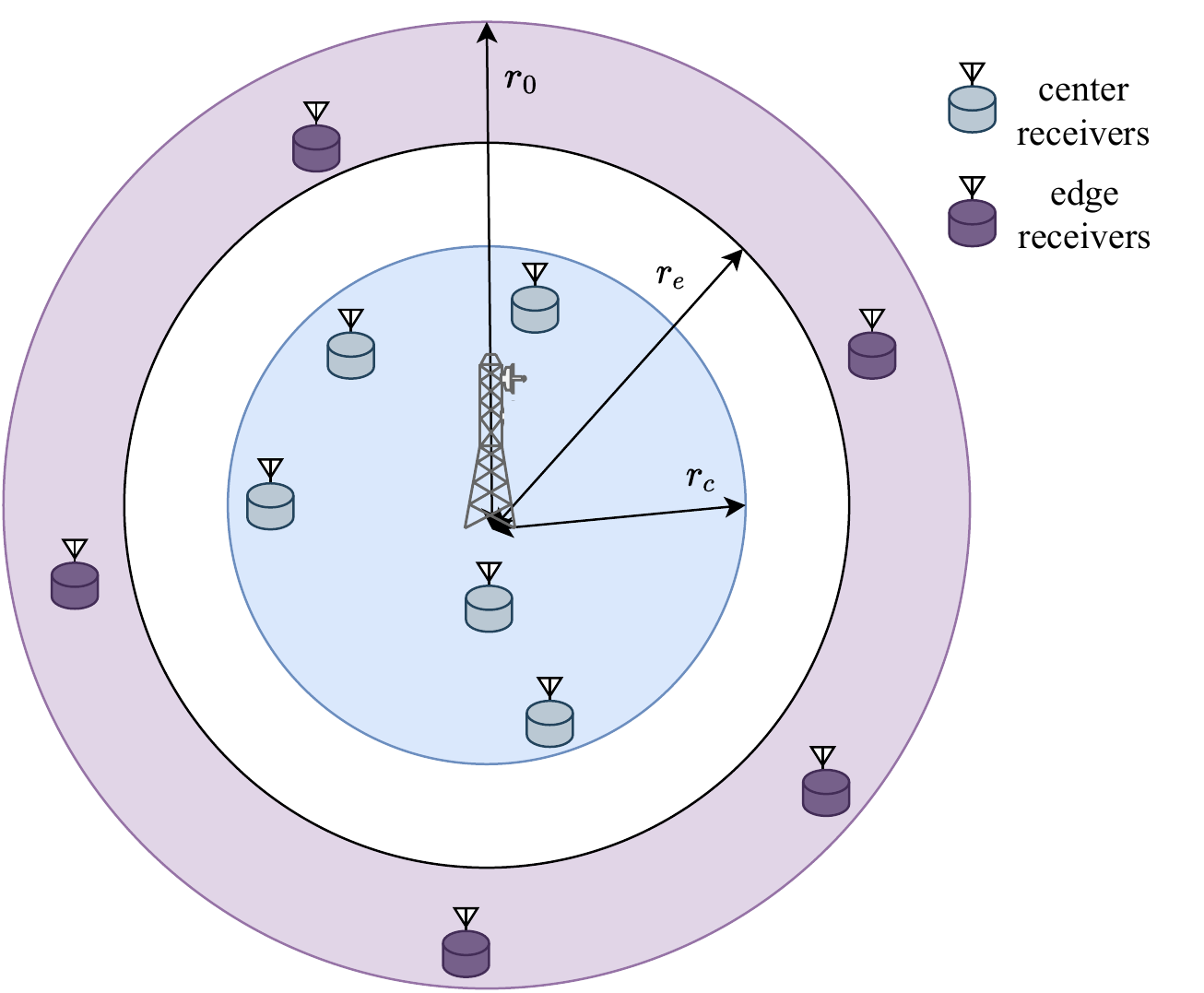}
	\caption{A single cell downlink network with $2K=10$ cache-enabled receivers i.e., $5$ center receivers and $5$ edge receivers.}\label{topology}
\end{figure}
\subsection{Rate Splitting}
The transmitter employs the RS scheme in order to communicate two non-orthogonal signals \cite{spectral}; one for each class of receivers. We focus on the performance of a typical center receiver and a typical edge receiver. As such, throughout this paper the indices $c$ and $e$ refer to the center and edge class, respectively. Accordingly, the transmitter employs the RS scheme to broadcast the messages $m_c$ and $m_e$ to the center and edge receiver, respectively. Each of the two messages is separated in two parts i.e., $m_c=\{m_c^0,m_c^p\}$ and $m_e=\{m_e^0,m_e^p\}$. The parts $m_c^0$ and $m_e^0$ are encoded together and transmitted in a common stream $s_0$ \cite{HANKOBA}. On the other hand, the parts $m_c^p$ and $m_e^p$ are transmitted in two private streams $s_c$ and $s_e$, respectively. Then, the superposition of the three streams is transmitted, by allocating the available power to $s_0$, $s_c$ and $s_e$ with $p_0=\beta P$, $p_c=(1-\beta)\rho P$ and $p_e=(1-\beta)(1-\rho)P$, respectively, where $\beta,\rho\in \left[0,1\right]$. Note that, $\beta$ and $\rho$ are the power allocation factors for the common and private streams, correspondingly. Therefore, the received SINR at the $n$-th receiver for decoding the stream $s_0$ is given by 
\begin{equation}\label{gammaC}
\eta_n^0=\frac{p_0}{p_c+p_e + \sigma^2 L_n^{-1} },
\end{equation} 
where $n \in \{c,e\}$. If the common stream is successfully decoded, then the $n$-th receiver attempts to decode the private stream $s_n$, with SINR $\eta_n^p$ given by 
\begin{equation}\label{gammaP1}
\eta_n^p=\frac{p_n }{p_k+\sigma^2 L_n^{-1}},
\end{equation}
where $k \in \{c,e\}$ and $k \neq n$. On the other hand, if $s_0$ is not successfully decoded, then the $n$-th receiver attempts to decode the private stream with SINR $\eta_n^{pI}$, which is expressed as 
\begin{equation}\label{gammaPI1}
\eta_n^{pI}=\frac{p_n }{p_0+p_k+\sigma^2 L_n^{-1}}.
\end{equation}
It is worth pointing out that, different from NOMA, the RS scheme allows partial interference cancellation at both the receivers, without requiring a channel-based ordering between the receivers.
\subsection{{File Placement - Coded Caching}}
We consider a file library consisting of the $F$ most popular files of the network, each of equal size. The files are sorted in descending order according to their popularity such that the file of rank $f$ is more popular than the file of rank $f+1$. Each receiver has an available storage of $M$ files. {Consider a class of $K$ receivers (center or edge) employing the CC policy introduced in \cite{MAN}, as follows}. Each receiver in the considered class fills its available storage with a fraction $\frac{M}{N}$ of each of the $N$ most popular files, $N > M$. Specifically, each file with a rank $f\leq N$, is partitioned into a set of subfiles, denoted by $\Lambda_f$, where $\big|\Lambda_f\big|= {K \choose t}$ and $t=\frac{MK}{N}$, such that $t \in \{1,\dots, K-1\}$. Let $\mathcal{W}_j$ denote the $j$-th set of receivers representing one of the $\big|\Lambda_f\big|$ ways to form a subset consisting of $t$ receivers, such that  $\big|\mathcal{W}_j\big|=t$ and $j\in\{1,\dots,\big|\Lambda_f\big|\}$. Each subfile is then indexed according to $S_{f,\mathcal{W}_j}$ and the $i$-th receiver\footnote{The receivers are indexed in a random order.}, $i \in \{1,\dots,K\}$ stores the subfiles 
\begin{equation}
\mathcal{T}_{f,i}=\{S_{f,\mathcal{W}_j}\in \Lambda_f ~|~ i \in \mathcal{W}_j,  \forall j \},\,\forall f\leq N,
\end{equation}
where $\big|\mathcal{T}_{f,i}\big|={K-1 \choose t-1}$, \cite{MAN}. Consider for a example a class of $K=5$ receivers, each with a cache size of $M=6$ files storing with CC the $N=10$ most popular files. Each of these files is partitioned into $\big|\Lambda_f\big|=10$ subfiles. In particular, the set of subfiles for the most popular file i.e., $f=1$, is given by $\Lambda_1=\{$$S_{1,\{1,2,3\}},$ $S_{1,\{1,2,4\}},$ $S_{1,\{1,2,5\}},$ $S_{1,\{1,3,4\}},$ $S_{1,\{1,3,5\}},$ $S_{1,\{1,4,5\}},$ $S_{1,\{2,3,4\}},$ $S_{1,\{2,3,5\}},$ $S_{1,\{2,4,5\}},$ $S_{1,\{3,4,5\}}$$\}$. Then, the first receiver i.e., $i=1$, stores the subfiles $\mathcal{T}_{1,1}=\{S_{1,\{1,2,3\}}$, $S_{1,\{1,2,4\}}$, $S_{1,\{1,2,5\}}$, $S_{1,\{1,3,4\}}$, $S_{1,\{1,3,5\}}$, $S_{1,\{1,4,5\}}\}$. In a similar way all the $10$ files are partitioned and the $5$ receivers store files corresponding to their index $i$. As such, each receiver caches $\big|\mathcal{T}\big|=6$ subfiles from each of the $10$ partitioned files. For example, the first receiver caches the subfiles sets $\mathcal{T}_{1,1}$, $\mathcal{T}_{2,1}$, $\mathcal{T}_{3,1}$, $\mathcal{T}_{4,1}$, $\mathcal{T}_{5,1}$, $\mathcal{T}_{6,1}$, $\mathcal{T}_{7,1}$, $\mathcal{T}_{8,1}$, $\mathcal{T}_{9,1}$ and $\mathcal{T}_{10,1}$. This combinatorial file placement enables the possibility of employing a linear network coding at the transmitter, in order to simultaneously serve the $K$ receivers. Detailed specifications for the transmission phase are provided in Section \ref{CRS}.
\section{A Caching-aided RS Technique}\label{CRS}

In this section, we present our proposed CRS technique and its employment in the network. We first provide the file transmission protocol for a class of receivers and then we present the various operating modes of the CRS technique.
\subsection{File transmission Protocol}\label{protocol}
Consider a class of receivers; center or edge, employing the CC policy. The $K$ receivers place requests for files from the library. {The transmitter responds to the $K$ requests with a single signal which conveys information for satisfying either a single request or all $K$ requests. Specifically} 
\begin{itemize}
	\item If all $K$ requests are for files of rank $f\leq N$, then all the requested files have been cached according to CC. In this case, the transmitter can employ a network linear coding and transmit XOR messages such that the non-cached subfiles are delivered to the corresponding receiver while simutaneously serving all the $K$ receivers \cite{MAN}. It is shown that the normalized transmitted load (file size) that occurs for each receiver is $\frac{1-M/N}{1+KM/N}$ \cite{MAN}.	
	\item If at least one request is for a file of rank $f>N$, then XOR transmissions are not feasible and therefore a single, randomly selected receiver is scheduled.
	\begin{itemize}
		\item If the request of the scheduled receiver is of rank $f\leq N$, the request refers to the non-cached partitions of the file i.e., a partial file request (PFR) is placed. As such, the remaining file portion, $1-\frac{M}{N}$, is transmitted.
		\item If the request is of rank $f>N$, then the requested file was not cached at all. Therefore the receiver places an entire file request (EFR), and the complete file is transmitted. 
	\end{itemize}
\end{itemize}
It is worth pointing out that, when the $K$ requests enable the XOR transmissions, not only all the $K$ receivers are served simultaneously, but also each receiver's corresponding load is lower. Furthermore, even when a single receiver is scheduled, due the CC placement policy and the possibility of a PFR, a lower load could occur in comparison to the EFR case\footnote{Note that, besides satisfying either $K$ or a single receiver, more intermediate scheduling options could be taken into account by adjusting accordingly the transmitted load.}, which also corresponds to the case where no files are cached. 

Another case where EFR occurs is when the receivers employ caching policies which require to store the entire files instead of the partitions of files. Evidently, all the non-cached requests forwarded to the transmitter refer to entire files. In this work, besides CC, we also consider the MPC caching policy. Different from the CC policy, when the MPC is employed at a class of receivers, each receiver caches the $M$ most popular files of the library \cite{MPC}. As such, any file request with rank $f\leq M$ can be satisfied locally while the requests of rank $f>M$ are forwarded to the transmitter i.e., EFR. We focus on the performance evaluation of the wireless links, as such when the MPC is employed, we consider that the transmitter schedules a receiver\footnote{We focus on the worst-case scenario where there is at least one request not locally satisfied and therefore is forwarded to the transmitter.} requesting a file with rank $f>M$. That is, unlike CC file transmission capabilities, if MPC is employed at the receivers, only a single request can be satisfied\footnote{Multiple receivers can be satisfied over a single signal in the case of common file requests. This does not affect the performance of the typical receiver in a class since the transmitted load is the same i.e., one for EFR. Hence without loss of generality we assume that each receiver requests for a different file.}. Moreover, we consider that when the $n$-th scheduled receiver employs the MPC policy $n\in\{c,e\}$, then this receiver is able to cancel out interference when the $k$-th, $k \in \{c,e\}$, $k\neq n$ receiver requests for a file of rank $f\leq M$. Whether a partition or the complete file is requested, we consider that the receiver employing MPC can draw out interference by utilizing the information in its local cache. Throughout the rest of the paper, we will refer to this technique as information-based interference cancellation (IIC). Note that, any other caching policy with an entire files placement could be applied since it would still enable IIC. 

\subsection{Operating modes of {{CRS}}}
\begin{table*}[t]
	\centering
	\caption{The main features of the CRS technique}
	\label{summary}
	\begin{tabular}{|c||c|c|c|c|c|}
		\hline
		\textbf{\begin{tabular}[c]{@{}c@{}}Modes\\ of operation\end{tabular}} &
		\textbf{\begin{tabular}[c]{@{}c@{}}Requests\\ center/edge\end{tabular}} &
		\textbf{\begin{tabular}[c]{@{}c@{}}Scheduled \\ center receivers\end{tabular}} &
		\textbf{\begin{tabular}[c]{@{}c@{}}Scheduled \\ edge receivers\end{tabular}} &
		\textbf{$m_c$} &
		\textbf{$m_e$} \\ \hline 
		\textbf{All MPC}                 & $f>M$/$f>M$                  & $1$        & $1$        & EFR & EFR \\ \hline
		\multirow{3}{*}{\textbf{CC/MPC}} & $f\leq M$, $\forall K$/$f>M$    & $K$        & $1$, w/IIC & XOR & EFR \\ \cline{2-6} 
		& $f \leq N$, $\forall K$/$f>M$    & $K$        & $1$        & XOR & EFR \\ \cline{2-6} 
		& $f\leq M$ /$f>M$             & $1$        & $1$, w/IIC & PFR & EFR \\ \hline
		\multirow{3}{*}{\textbf{MPC/CC}} & $f>M$/$f\leq M$, $\forall K$    & $1$, w/IIC & $K$        & EFR & XOR \\ \cline{2-6} 
		& $f>M$/$f\leq N$, $\forall K$    & $1$        & $K$        & EFR & XOR \\ \cline{2-6} 
		& $f>M$/$f\leq M$              & $1$, w/IIC & $1$        & EFR & PFR \\ \hline
		\multirow{9}{*}{\textbf{CC/CC}} &
		\begin{tabular}[c]{@{}c@{}}$f\leq N$, $\forall K$/ $f\leq N$, $\forall K$\end{tabular} &
		$K$ &
		$K$ &
		XOR &
		XOR \\ \cline{2-6} 
		& $f\leq N$, $\forall K$/$f\leq N$ & $K$        & $1$        & XOR & PFR \\ \cline{2-6} 
		& $f\leq N$, $\forall K$/$f>N$    & $K$        & $1$        & XOR & EFR \\ \cline{2-6} 
		& $f\leq N$/$f\leq N$, $\forall K$ & $1$        & $K$        & PFR & XOR \\ \cline{2-6} 
		& $f\leq N$/$f\leq N$          & $1$        & $1$        & PFR & PFR \\ \cline{2-6} 
		& $f\leq N$/$f>N$               & $1$        & $1$        & PFR & EFR \\ \cline{2-6} 
		& $f>N$/$f\leq N$, $\forall K$    & $1$        & $K$        & EFR & XOR \\ \cline{2-6} 
		& $f>N$/$f\leq N$              & $1$        & $1$        & EFR & PFR \\ \cline{2-6} 
		& $f>N$/$f>N$                  & $1$        & $1$        & EFR & EFR \\ \hline
	\end{tabular}
	
\end{table*}
Given the caching policies CC or MPC, the file transmission protocol is employed independently and simultaneously at the two classes of the receivers through the employment of RS at the transmitter. Accordingly, the proposed CRS technique can be implemented in four operating modes which are described as follows.

\begin{itemize}
	\item \textbf{All MPC}: Each receiver in the network caches the files of rank $f\leq M$. The transmitter, in this case, schedules one receiver from each class. We consider that the two receivers request for two distinct files, both of rank $f>M$. In this case, both receivers place an EFR and since their caches consist of the same files i.e., files of rank $f\leq M$, IIC is not feasible. 
	\item \textbf{Hybrid modes}: One of the two sets employs MPC while the other one employs CC; (i) CC/MPC and (ii) MPC/CC, where X/Y denotes that caching policies X and Y are applied in center and edge receivers, respectively. One of the two receiver classes employs MPC while the other one employs CC. The receiver employing the MPC scheme always places an EFR, while applying IIC if possible. That is, when the requests from the receiver class employing the CC policy, are of rank $f\leq M$. Different from the MPC requests, the transmitted load for serving the class employing CC, varies and depends on the requests of all $K$ receivers as explained in Section \ref{protocol}. Note that, in the case where XOR transmissions are employed, the transmitter serves $K+1$ requests, otherwise two receivers are served.
	\item \textbf{All CC}: Both classes of receivers employ CC, where the subfiles are independently partitioned between the two classes. Since all receivers have partitions of files in their cache and not the entire files, IIC is not feasible by any receiver. However, all the possible loads of the transmitter might occur by any class depending on the served receivers requests. Specifically, in each class, either the XOR transmissions are employed, or the scheduled receiver places a PFR or an EFR. When operating in this mode, depending on the requests, the transmitter is able to satisfy either $2K$ receivers (when XOR transmissions are employed for both classes), or $K+1$ (when XOR transmissions are solely employed for a single class) or two receivers (without XOR transmissions). 
\end{itemize}

{The proposed CRS technique is summarized in Table \ref{summary} and is depicted in Fig. \ref{model} which demonstrates an example of the technique operating under the CC/MPC mode.}
   \begin{figure*}[t]\centering 
   	\includegraphics[width=\linewidth]{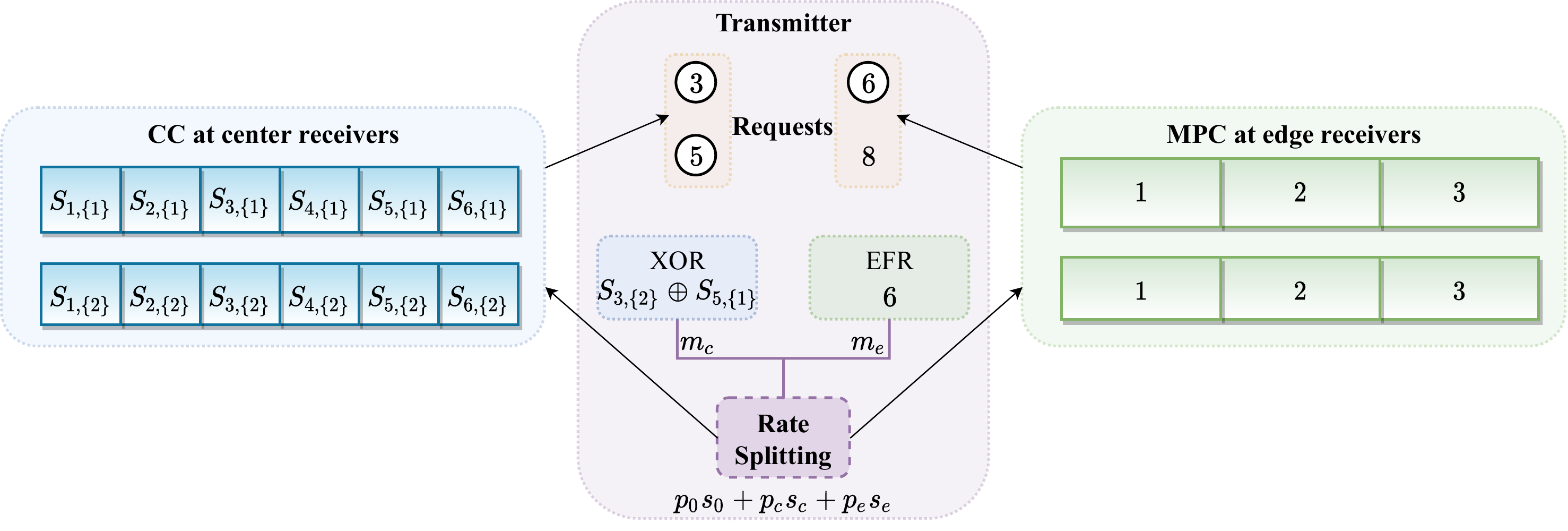}
   	\caption{The proposed CRS operating in CC/MPC mode; $K=2$, $F=10$, $M=3$, $N=6$. The transmitter serves through RS, $K$ center receivers, requesting the non-cached partitions of file ranks $3$ and $5$, with XOR transmissions and a randomly selected edge receiver requesting the entire file of rank $6$. Note that, interference mitigation is only achieved through RS since IIC is not feasible.}\label{model}
   \end{figure*}
\section{CRS Achieved Rates}\label{analysis}
In the following, we derive the rate achieved at the typical center and typical edge receiver when the CRS technique is employed. By taking into account minimum SINR constraints, we derive the rates of the receivers for all the operating modes presented in Section \ref{CRS}. We present how each stream is decoded and analytically demonstrate how the streams' power allocation as well as the various transmission loads affect the achieved rates.  
\subsection{Stream Decoding Rates}
We consider that for decoding each stream, the rate at the $n$-th receiver should exceed a predefined threshold given by $\log_2\left(1+\zeta\right)$ for $s_0$ and by $\log_2\left(1+\xi\right)$ for $s_n$. In order to define the corresponding SINR threshold we need to take into account the transmitted file size. According to the file transmission protocol provided in Section \ref{protocol}, the load of the transmitter depends on the available cache and the served requests. In particular, a file is either entirely transmitted if an EFR is placed, or its non-cached portion is delivered; either to an individual scheduled receiver placing a PFR or in XOR transmissions serving all $K$ receivers in a class of receivers. Thereafter, the amount of the required channel resources also varies. In order to capture this effect, we consider that the rate at a receiver with SINR $\eta$ is expressed by 
\begin{equation}\label{rate}
R(\omega_i,\eta)=\omega_i\log_2\left(1+\eta\right),
\end{equation}
where the pre-log factor $\omega_i$ denotes the $i$-th element of the set $\omega=\Bigl\{ 1,\frac{1}{1-M/N},\frac{1+MK/N}{1-M/N} \Bigr\}$ and corresponds to the inverse of the transmitter's load for the cases EFR, PFR, XOR transmissions, respectively. Note that, even though the XOR transmissions serve $K$ receivers, we focus on their average performance i.e., typical receiver, while the effect of $K$ receivers is integrated in $\omega_3$. Following from the aforementioned when decoding the private stream, the SINR threshold varies according to the requested file size and is given by $\Xi(\omega_i)=\left(1+\xi \right)^{1/\omega_i}-1$. On the other hand, since the stream $s_0$ should be decoded by both the scheduled receivers, the SINR threshold is common and set at $\zeta$. Consider now the case where IIC is not feasible. By taking into account the streams decoding order and the corresponding SINR thresholds, the achieved rate for decoding each stream is defined as follows.
\begin{itemize}
	\item If both receivers successfully decode the common stream $s_0$ i.e., $\eta_c^0>\zeta$ and $\eta_e^0>\zeta$, then $s_0$ has a rate $\R^0_{b}=\mathbb{E}\left[\min\{R(\omega_1,\eta_c^0),R(\omega_1,\eta_e^0)\} \mid \eta_c^0>\zeta, \eta_e^0>\zeta\right]$. As mentioned above the SINR threshold is the same for the two receivers, while the $\min$ operation establishes that the common stream's rate is achievable by both receivers. Let $u \in \left[0,1\right]$ define the common rate allocation factor for the two receivers. Then, $\R^0_{b}$ is allocated to the center and edge receivers with $\omega_i u\R^0_{b}$ and $\omega_j (1-u)\R^0_{b}$, respectively; with $i,j \in \{1,2,3\}$.
	\item If solely the $n$-th receiver successfully decodes $s_0$ i.e.,  $\eta_n^0>\zeta$ and $\eta_k^0 < \zeta$, $k\neq n$, then $\R^0_k=0$ and $\R_n^0=\omega_i\mathbb{E}\left[R(\omega_1,\eta_n^0)\mid \eta_n^0>\zeta, \eta_k^0<\zeta \right]$.
	\item Following the decoding order, if $s_0$ is decoded at the $n$-th receiver ($\eta_n^0>\zeta$), a partial interference cancellation is established and the stream $s_n$ is decoded if $\eta_n^p>\Xi(\omega_i)$, with a rate $\R^p_n(\omega_i)=\mathbb{E}\left[R(\omega_i,\eta_n^p)\mid \eta_n^0>\zeta, \eta_n^p>\Xi(\omega_i) \right]$.
	\item On the other hand, if $s_0$ is not removed i.e., $\eta_n^0<\zeta$, then the stream $s_n$ is successfully decoded if $\eta_n^{pI}>\Xi(\omega_i)$, with a rate $\R^{pI}_n(\omega_i)=\mathbb{E}\left[R(\omega_i,\eta_n^{pI})\mid \eta_n^0<\zeta, \eta_n^{pI}>\Xi(\omega_i) \right]$.
\end{itemize}
In the case where IIC is employed at the $n$-th receiver, then the decoding rates are evaluated as above with the substitution of $\eta_n^0$, $\eta_n^p$ and $\eta_n^{pI}$ with $\eta_{n,\rm{IIC}}^0\triangleq\frac{p_0 }{p_n+\sigma^2 L_n^{-1}}$, $\eta_{n,\rm{IIC}}^{p}\triangleq\frac{p_n L_n}{\sigma^2}$ and $\eta_{n,\rm{IIC}}^{pI}\triangleq\frac{p_n}{p_0+\sigma^2 L_n^{-1}}$, respectively. In the following we derive the achieved rate for decoding each stream as defined above and evaluated by 
\begin{equation}\label{condef}
\R=\mathbb{E}\left[R\big | \epsilon \right]=\frac{\mathbb{E}\left[{\mathbf{1}\{\epsilon\}}R\right]}{\mathbb{P}\left[\epsilon\right]},
\end{equation}
where $\epsilon$ represents the event for which $\R>0$ i.e., if $\mathbb{P}\left[\epsilon\right]=0$ then $\R=0$. For the derivation of the achieved rates the coverage probability $\pi_\eta(t)\triangleq\mathbb{P}\left[\eta>t\right]$ as well as the probability density function (PDF) $g_\eta$ of each SINR are required. We provide the SINR distributions for both the center and the edge receiver in Appendix \ref{app1}. Also, besides $\eta_{n,\rm{IIC}}^p$, the SINR expressions for the $n$-th receiver are upper bounded when $P \to \infty$. As such, throughout the rest of the paper we make use of the following set 
\begin{equation}\label{set}
\vartheta_n=\left\{\frac{p_0}{p_n+p_k}, \frac{p_n}{p_k},\frac{p_n}{p_0+p_k}, \frac{p_0}{p_n} \right\}, 
\end{equation}
where $k \in \{c,e\}$, $k \neq n$ and the $i$-th element of the set is denoted by $\vartheta_{n,i}$. Note that, when the denominator of $\vartheta_{n,i}$ becomes zero, then $\vartheta_{n,i}=\infty$. The expressions $\eta_{n}^0$, $\eta_{n}^p$, $\eta_{n}^{pI}$, $\eta_{n,IIC}^0$ and $\eta_{n,IIC}^{pI}$ are upper bounded by $\vartheta_{n,1}$, $\vartheta_{n,2}$, $\vartheta_{n,3}$, $\vartheta_{n,4}$ and $\vartheta_{n,4}^{-1}$, respectively. Clearly $\vartheta_{c,1}=\vartheta_{e,1}$, hence in the rest of the paper we drop the receiver index and make use of $\vartheta_1$.
\subsection{Achieved Rates: IIC is not feasible}
We now derive the rate achieved at each receiver when IIC is not employed. We first obtain the common rate $\R^0_{b}$ for the case where both received SINRs at the center and edge receivers achieve the minimum threshold $\zeta$. As such, we first express the instantaneous common rate as
\begin{align}
\min\{R(\omega_1,&\eta_c^0),R(\omega_1,\eta_e^0)\mid \eta_c^0>\zeta,\eta_e^0>\zeta\}\nonumber\\
&=\begin{cases}\label{min}
R(\omega_1,\eta_c^0)\mid \eta_c^0>\zeta,\eta_e^0>\zeta,& \eta_c^0<\eta_e^0,\\
R(\omega_1,\eta_e^0)\mid \eta_c^0>\zeta,\eta_e^0>\zeta,&\eta_e^0<\eta_c^0.
\end{cases}
\end{align}
Since $\eta_c^0$ and $\eta_e^0$ are independent then their joint PDF is given by $g_{\eta_c^0}(t)g_{\eta_e^0}(t)$ and by using \eqref{condef}, the rate can be expressed as
\begin{align}
\R^0_{b}=\frac{\mathbb{E}\left[{\mathbf{1}{\{\eta_c^0>\zeta,\eta_e^0>\zeta\}}}\min\{R(\omega_1,\eta_c^0),R(\omega_1,\eta_e^0)\}\right]}{\pi_{\eta_c^0}\left(\zeta\right)\pi_{\eta_e^0}\left(\zeta\right)},
\end{align}
and evaluated by  
\begin{align}\label{commonR}
\R^0_{b}=&\frac{1}{\pi_{\eta_c^0}\left(\zeta\right)\pi_{\eta_e^0}\left(\zeta\right)}\Big(\!\int_{\zeta}^{\vartheta_1^{-}}\!\!\!\int_{\zeta}^{y}\!R(\omega_1,x)g_{\eta_c^0}(x)g_{\eta_e^0}(y)\,dx\,dy+\int_{\zeta}^{\vartheta_1^{-}}\!\!\!\int_{\zeta}^{x}\!R(\omega_1,y)g_{\eta_e^0}(y)g_{\eta_c^0}(x)\,dy\,dx\!\Big),
\end{align}
where $\pi_{\eta_c^0}\left(\zeta\right)$, $\pi_{\eta_e^0}\left(\zeta\right)$, $g_{\eta_c^0}(t)$ and $g_{\eta_e^0}(t)$ are provided in Appendix \ref{app1}. Note that, $\R^0_{b}$ accounts for the case where both receivers successfully decode the common stream $s_0$. When the $n$-th receiver decodes $s_0$ while the $k$-th receiver does not, with $n\neq k$, then due to the independence between the SINRs $\eta_c^0$ and $\eta_e^0$, the rate for decoding $s_0$ at the $n$-th receiver is given by \vspace{-2mm}
\begin{equation}\label{Rcmonostou}
\R_n^0=\frac{1}{\pi_{\eta_n^0}(\zeta)}\int_{\zeta}^{\vartheta_1^{-}}R(\omega_1,t)g_{\eta^0_n}(t)\,dt.
\end{equation}
As such the achieved rate for decoding $s_0$ at the center receiver is 
\begin{equation}
\R_c^{s_0}(\omega_i)=\pi_{\eta_e^0}(\zeta)\omega_i u\R_{b}^0+ \left(1-\pi_{\eta_e^0}(\zeta)\right)\omega_i\R_c^0,
\end{equation}
and, respectively, the achieved rate for decoding $s_0$ at the edge receiver is  
\begin{equation}
\R_e^{s_0}(\omega_j)=\pi_{\eta_c^0}(\zeta)\omega_j (1-u)\R_{b}^0+ \left(1-\pi_{\eta_c^0}(\zeta)\right)\omega_j\R_e^0,
\end{equation}
where $i,j \in \{1,2,3\}$. From the expressions above, it is clear that, with a higher pre-log factor $\omega$, a higher rate occurs and that depends on the served request. Furthermore, the flexibility of RS allows to allocate $\R_{b}^0$ to the receivers, through the factor $u$, in order to enhance the rate at either receiver, for the case where both decode $s_0$. In particular, a higher $u$ provides a higher rate to the center receiver, while with a lower $u$ higher rate is allocated to the edge receiver.

We now focus on the rate for decoding the private stream $s_n$. In the following proposition, we provide the rate $\R_n^p$, which is the rate for decoding $s_n$ when the common stream is successfully decoded. 
\begin{prop}\label{prop3}
	The rate for decoding the private stream $s_n$, $n \in \{c,e\}$, when the common stream $s_{0}$ is successfully decoded is given by 
	\begin{equation}
	\R_n^p(\omega_i)=\frac{1}{\pi_{\eta_n^p}(\Xi(\omega_i))}\int_{\Xi(\omega_i)}^{\vartheta_{n,2}^{-}}R(\omega_i,t)g_{\eta_n^p}(t)\,dt,
	\end{equation}
	if $\frac{\lambda}{\Xi(\omega_i)^{-1}-\vartheta_{n,2}^{-1}}\geq1,  \zeta< \vartheta_1, \Xi(\omega_i)<\vartheta_{n,2}$ and by 
	\begin{equation}
	\R_n^p(\omega_i)=\frac{1}{\pi_{\eta_n^0}\left(\zeta\right)}\int_{\theta_0}^{\vartheta_{n,2}^{-}}R(\omega_i,t)g_{\eta_n^p}(t)\,dt,
	\end{equation}
	if $\frac{\lambda}{\Xi(\omega_i)^{-1}-\vartheta_{n,2}^{-1}}<1,  \zeta< \vartheta_1, \Xi(\omega_i)<\vartheta_{n,2}$; otherwise $\R_n^p=0$, where $\lambda=\vartheta_{n,4}\left(\zeta^{-1}-\vartheta_1^{-1}\right)$ and $\theta_0=\left(\lambda+\vartheta_{n,2}^{-1}\right)^{-1}$.
\end{prop}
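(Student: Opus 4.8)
The plan is to start from the conditional-expectation definition \eqref{condef} with the event $\epsilon=\{\eta_n^0>\zeta,\,\eta_n^p>\Xi(\omega_i)\}$, so that
\[
\R_n^p(\omega_i)=\frac{\mathbb{E}\!\left[\mathbf{1}\{\eta_n^0>\zeta,\,\eta_n^p>\Xi(\omega_i)\}\,R(\omega_i,\eta_n^p)\right]}{\mathbb{P}\!\left[\eta_n^0>\zeta,\,\eta_n^p>\Xi(\omega_i)\right]}.
\]
The crucial observation is that, from \eqref{gammaC} and \eqref{gammaP1}, both $\eta_n^0$ and $\eta_n^p$ are deterministic, strictly increasing functions of the single random variable $L_n$, since $\sigma^2 L_n^{-1}$ enters each denominator additively. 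Hence the apparently two-dimensional event on $(\eta_n^0,\eta_n^p)$ collapses to a single threshold condition, which I would phrase directly in terms of $\eta_n^p$ so as to reuse the marginal PDF $g_{\eta_n^p}$ and coverage probability $\pi_{\eta_n^p}$ from Appendix \ref{app1}.

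First I would translate the common-stream constraint into a constraint on $\eta_n^p$. Since $L_n\mapsto\eta_n^0$ is an increasing bijection onto $(0,\vartheta_1)$ and $L_n\mapsto\eta_n^p$ is an increasing bijection onto $(0,\vartheta_{n,2})$, the event $\{\eta_n^0>\zeta\}$ equals $\{\eta_n^p>\theta_0\}$, where $\theta_0$ is the value taken by $\eta_n^p$ at the $L_n$ for which $\eta_n^0=\zeta$. Eliminating $\sigma^2 L_n^{-1}$ between $\eta_n^0=\zeta$ and the definition of $\eta_n^p$ gives $\theta_0^{-1}=\lambda+\vartheta_{n,2}^{-1}$ with $\lambda=\vartheta_{n,4}(\zeta^{-1}-\vartheta_1^{-1})$, matching the statement; the hypothesis $\zeta<\vartheta_1$ is exactly what guarantees $\theta_0\in(0,\vartheta_{n,2})$ so that this image point is admissible. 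The joint event therefore simplifies to $\{\eta_n^p>\max(\theta_0,\Xi(\omega_i))\}$, and a key by-product is the probability identity $\mathbb{P}[\eta_n^p>\theta_0]=\mathbb{P}[\eta_n^0>\zeta]=\pi_{\eta_n^0}(\zeta)$, which will supply the normalizing denominator in the second branch.

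It then remains to decide which of $\theta_0$ and $\Xi(\omega_i)$ is binding. Comparing $\theta_0^{-1}=\lambda+\vartheta_{n,2}^{-1}$ with $\Xi(\omega_i)^{-1}$ shows that $\theta_0\le\Xi(\omega_i)$ is equivalent to $\lambda\ge\Xi(\omega_i)^{-1}-\vartheta_{n,2}^{-1}$, i.e.\ to $\lambda/(\Xi(\omega_i)^{-1}-\vartheta_{n,2}^{-1})\ge1$, the denominator being positive precisely because $\Xi(\omega_i)<\vartheta_{n,2}$. In this first branch the private-stream constraint dominates, the effective event is $\{\eta_n^p>\Xi(\omega_i)\}$, the denominator is $\pi_{\eta_n^p}(\Xi(\omega_i))$, and writing the numerator as $\int_{\Xi(\omega_i)}^{\vartheta_{n,2}^-}R(\omega_i,t)g_{\eta_n^p}(t)\,dt$ (the upper limit $\vartheta_{n,2}$ being the essential supremum of $\eta_n^p$) yields the first expression. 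In the complementary branch the common-stream constraint dominates, the effective event is $\{\eta_n^p>\theta_0\}$, and using the probability identity above for the denominator together with the integral from $\theta_0$ gives the second expression. The degenerate cases $\zeta\ge\vartheta_1$ or $\Xi(\omega_i)\ge\vartheta_{n,2}$ make the conditioning event null, so $\mathbb{P}[\epsilon]=0$ and $\R_n^p=0$, covering the ``otherwise'' clause.

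The main obstacle I anticipate is purely the algebraic bookkeeping: carefully eliminating $\sigma^2 L_n^{-1}$ to obtain $\theta_0$ in the compact form $(\lambda+\vartheta_{n,2}^{-1})^{-1}$, and then verifying that the inequality $\theta_0\lessgtr\Xi(\omega_i)$ is genuinely equivalent to the stated ratio condition, including tracking the sign of $\Xi(\omega_i)^{-1}-\vartheta_{n,2}^{-1}$ so the inequality is not flipped. Everything else reduces to the one structural fact that the two SINRs are comonotone functions of $L_n$, which is what licenses replacing the joint event by a single threshold on $\eta_n^p$.
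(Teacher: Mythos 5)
Your proposal is correct and follows essentially the same route as the paper's proof: both arguments rest on the observation that $\eta_n^0$ and $\eta_n^p$ are increasing functions of the single source of randomness (the paper writes the two events as thresholds on $h_n$ given $d_n$, you phrase it as comonotonicity in $L_n$), so the joint event is nested and reduces to whichever of $\Xi(\omega_i)$ and $\theta_0=(\lambda+\vartheta_{n,2}^{-1})^{-1}$ is binding, which is exactly the stated ratio condition. The only cosmetic difference is that in the second branch the paper reaches the integral over $[\theta_0,\vartheta_{n,2})$ by subtracting the contribution of $\{\eta_n^0<\zeta,\ \eta_n^p>\Xi(\omega_i)\}$, whereas you identify $\{\eta_n^0>\zeta\}=\{\eta_n^p>\theta_0\}$ directly; the computations and conclusions coincide.
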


\begin{proof}
	See Appendix \ref{proof3}.
\end{proof}
On the other hand, when $s_0$ is not successfully decoded, then the stream $s_n$ is decoded with rate $\R_n^{pI}$, which is provided below.
\begin{prop}\label{prop4}
	The rate for decoding the stream $s_n$, $n \in \{c,e\}$, when $s_0$ is not successfully decoded is given by
	\begin{equation}
	\R_n^{pI}(\omega_i)=\frac{1}{\pi_{\eta_n^{pI}}(\Xi(\omega_i))}\int_{\Xi(\omega_i)}^{\vartheta_{n,3}^{-}}R(\omega_i,t)g_{\eta_n^{pI}}(t)\,dt,
	\end{equation}
	if $\zeta> \vartheta_1$, $\Xi(\omega_i)<\vartheta_{n,3}$ and by 
	\begin{equation}
	\R_n^{pI}(\omega_i)=\frac{1}{\pi_{\eta_n^{pI}}(\Xi(\omega_i))-\pi_{\eta_n^0}\left(\zeta\right)}\int_{\Xi(\omega_i)}^{\theta_{I}}R(\omega_i,t)g_{\eta_n^{pI}}(t)\,dt,
	\end{equation}
	if $\zeta< \vartheta_1, \frac{\lambda}{\Xi(\omega_i)^{-1}-\vartheta_{n,3}^{-1}}<1$, $\Xi(\omega_i)<\theta_{I}$; otherwise $\R_n^{pI}=0$, where $\lambda=\vartheta_{n,4}\left(\zeta^{-1}-\vartheta_1^{-1}\right)$ and $\theta_I=\left(\lambda+\vartheta_{n,3}^{-1}\right)^{-1}$. 
\end{prop}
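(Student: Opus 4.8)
The plan is to start from the general rate definition \eqref{condef}, which for the private stream decoded \emph{without} removing $s_0$ reads
\[
\R_n^{pI}(\omega_i)=\frac{\mathbb{E}\big[\mathbf{1}\{\eta_n^0<\zeta,\,\eta_n^{pI}>\Xi(\omega_i)\}\,R(\omega_i,\eta_n^{pI})\big]}{\mathbb{P}\big[\eta_n^0<\zeta,\,\eta_n^{pI}>\Xi(\omega_i)\big]},
\]
with conditioning event $\epsilon=\{\eta_n^0<\zeta,\,\eta_n^{pI}>\Xi(\omega_i)\}$. The crucial observation is that, by \eqref{gammaC} and \eqref{gammaPI1}, both $\eta_n^0$ and $\eta_n^{pI}$ are deterministic, strictly decreasing functions of the single random quantity $\sigma^2L_n^{-1}$. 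Hence the two constraints defining $\epsilon$ are comonotone and collapse into a single interval constraint on $\eta_n^{pI}$, which is exactly the variable over which the integral is taken; this is what allows a two-dimensional probability to be written as a one-dimensional integral of $g_{\eta_n^{pI}}$ from Appendix \ref{app1}.

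Concretely, I would eliminate $\sigma^2L_n^{-1}$ between the two SINR expressions. The event $\{\eta_n^0<\zeta\}$ is equivalent to $\{\sigma^2L_n^{-1}>p_0/\zeta-p_n-p_k\}$, and inserting this threshold into $\eta_n^{pI}$ shows $\{\eta_n^0<\zeta\}=\{\eta_n^{pI}<\theta_I\}$, where after simplifying with $\vartheta_1=p_0/(p_n+p_k)$, $\vartheta_{n,3}=p_n/(p_0+p_k)$ and $\vartheta_{n,4}=p_0/p_n$ one obtains $\theta_I^{-1}=\lambda+\vartheta_{n,3}^{-1}$ with $\lambda=\vartheta_{n,4}(\zeta^{-1}-\vartheta_1^{-1})$, matching the stated $\theta_I=(\lambda+\vartheta_{n,3}^{-1})^{-1}$. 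Read in the reverse direction the same equivalence gives $\{\eta_n^{pI}>\theta_I\}=\{\eta_n^0>\zeta\}$, and therefore $\pi_{\eta_n^{pI}}(\theta_I)=\pi_{\eta_n^0}(\zeta)$, which is precisely the identity that produces the compact denominator.

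The remainder is a case split on the sign of $\zeta-\vartheta_1$. Since $\eta_n^0$ is upper bounded by $\vartheta_1$, if $\zeta>\vartheta_1$ the constraint $\{\eta_n^0<\zeta\}$ holds almost surely, so $\epsilon$ reduces to $\{\eta_n^{pI}>\Xi(\omega_i)\}$; the normalizing probability is then $\pi_{\eta_n^{pI}}(\Xi(\omega_i))$ and the integral runs up to the essential supremum $\vartheta_{n,3}^{-}$ of $\eta_n^{pI}$, yielding the first expression, which requires $\Xi(\omega_i)<\vartheta_{n,3}$ for positive probability. If instead $\zeta<\vartheta_1$, then $\lambda>0$, $\theta_I<\vartheta_{n,3}$, and $\epsilon$ is the bounded interval $\{\Xi(\omega_i)<\eta_n^{pI}<\theta_I\}$; here $\mathbb{P}[\epsilon]=\pi_{\eta_n^{pI}}(\Xi(\omega_i))-\pi_{\eta_n^{pI}}(\theta_I)=\pi_{\eta_n^{pI}}(\Xi(\omega_i))-\pi_{\eta_n^0}(\zeta)$ by the identity above, and the numerator integral runs from $\Xi(\omega_i)$ to $\theta_I$, giving the second expression. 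The nonemptiness condition $\Xi(\omega_i)<\theta_I$ is exactly $\frac{\lambda}{\Xi(\omega_i)^{-1}-\vartheta_{n,3}^{-1}}<1$ once $\theta_I^{-1}=\lambda+\vartheta_{n,3}^{-1}$ is substituted and $\Xi(\omega_i)<\vartheta_{n,3}$ is used to fix the sign; whenever it fails, $\epsilon$ is null and $\R_n^{pI}=0$.

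The main obstacle is the comonotonicity step: recognizing that the joint event over the two correlated SINRs collapses to a single interval, and pushing the threshold $\zeta$ on $\eta_n^0$ through to the correct threshold $\theta_I$ on $\eta_n^{pI}$. The delicate bookkeeping is matching the resulting $\theta_I$ and $\lambda$ to the $\vartheta$-notation and identifying $\pi_{\eta_n^{pI}}(\theta_I)=\pi_{\eta_n^0}(\zeta)$ to obtain the clean denominator; the integrals themselves are routine once $g_{\eta_n^{pI}}$ is inserted. I expect the argument to mirror the proof of Proposition \ref{prop3} in Appendix \ref{proof3}, with $\eta_n^p$ replaced by $\eta_n^{pI}$ and the two regimes $\zeta\gtrless\vartheta_1$ interchanged.
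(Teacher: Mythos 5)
Your proposal is correct and follows essentially the same route as the paper's proof in Appendix E: both arguments exploit that $\eta_n^0$ and $\eta_n^{pI}$ are monotone in the same random quantity (the paper writes this as a two-sided inequality on $h_c$, you as the collapse of the joint event to the interval $\Xi(\omega_i)<\eta_n^{pI}<\theta_I$), derive $\theta_I$ from the threshold equivalence $\{\eta_n^0<\zeta\}=\{\eta_n^{pI}<\theta_I\}$, and split on $\zeta\gtrless\vartheta_1$ to obtain the two normalizations $\pi_{\eta_n^{pI}}(\Xi(\omega_i))$ and $\pi_{\eta_n^{pI}}(\Xi(\omega_i))-\pi_{\eta_n^0}(\zeta)$. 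Your identity $\pi_{\eta_n^{pI}}(\theta_I)=\pi_{\eta_n^0}(\zeta)$ and the reduction of $\Xi(\omega_i)<\theta_I$ to $\frac{\lambda}{\Xi(\omega_i)^{-1}-\vartheta_{n,3}^{-1}}<1$ match the paper's computation exactly.
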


\begin{proof}
	See Appendix \ref{proof4}.
\end{proof}
Provided with the individual rates for each stream we can now evaluate the rate achieved at each receiver, which is given in the following theorem.
\begin{theorem}\label{theorem1}
	The rate achieved at the $n$-th, $n \in \{c,e\}$ receiver for decoding the streams $s_0$ and/or $s_n$ is given by 
	\begin{equation}
	\R_n(\omega_i)=\R_n^{s_0}(\omega_i)+\frac{\pi_{\eta_n^p}(\Xi(\omega_i))}{\pi_{\eta_n^0}(\zeta)}\R_n^p(\omega_i),
	\end{equation}
	if $\zeta< \vartheta_1, \frac{\lambda}{\Xi(\omega_i)^{-1}-\vartheta_{n,3}^{-1}}>1$, $\frac{\lambda}{\Xi(\omega_i)^{-1}-\vartheta_{n,2}^{-1}}\geq1$, by
	\begin{equation}
	\R_n(\omega_i)=\R_n^{s_0}(\omega_i)+\R_n^p(\omega_i),
	\end{equation}
	if $\zeta< \vartheta_1, \frac{\lambda}{\Xi(\omega_i)^{-1}-\vartheta_{n,3}^{-1}}>1$, $\frac{\lambda}{\Xi(\omega_i)^{-1}-\vartheta_{n,2}^{-1}}<1$, by
	\begin{align}
	\R_n(\omega_i)=\frac{\pi_{\eta_n^0}\left(\zeta\right)}{\pi_{\eta_n^{pI}}(\Xi(\omega_i))}&\left(\R_n^{s_0}(\omega_i)+\R_n^p(\omega_i)\right)+\left(1-\frac{\pi_{\eta_n^0}\left(\zeta\right)}{\pi_{\eta_n^{pI}}(\Xi(\omega_i))}\right)\R_n^{pI}(\omega_i),
	\end{align}
	if $\zeta< \vartheta_1, \frac{\lambda}{\Xi(\omega_i)^{-1}-\vartheta_{n,3}^{-1}}<1$ and by 
	\begin{equation}
	\R_n(\omega_i)=\R_n^{pI}(\omega_i),
	\end{equation}
	if $\zeta> \vartheta_1$, $\Xi(\omega_i)<\vartheta_{n,3}$; otherwise $\R_n(\omega_i)=0$, where $\lambda=\vartheta_{n,4}\left(\zeta^{-1}-\vartheta_1^{-1}\right)$.
\end{theorem}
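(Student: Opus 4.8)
The plan is to exploit the fact that the three SINRs of receiver $n$ in \eqref{gammaC}--\eqref{gammaPI1} are strictly increasing functions of the single gain $L_n$, since they differ only in their (constant) interference terms while sharing the same noise contribution $\sigma^2 L_n^{-1}$. Each decoding event is therefore equivalent to a half-line event in $L_n$: I would write $\{\eta_n^0>\zeta\}=\{L_n>\ell_0\}$, $\{\eta_n^p>\Xi(\omega_i)\}=\{L_n>\ell_p\}$ and $\{\eta_n^{pI}>\Xi(\omega_i)\}=\{L_n>\ell_{pI}\}$, valid respectively when $\zeta<\vartheta_1$, $\Xi(\omega_i)<\vartheta_{n,2}$ and $\Xi(\omega_i)<\vartheta_{n,3}$, and empty otherwise because each SINR is bounded above by the corresponding element of $\vartheta_n$. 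Since $\eta_n^{pI}<\eta_n^p$ always (the former carries the extra interferer $p_0$), the ordering $\ell_p<\ell_{pI}$ holds throughout, which is the structural fact that forbids one of the a priori possible cases.

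Next I would translate the ratio-conditions of the statement into orderings of these thresholds. Evaluating $\eta_n^p$ and $\eta_n^{pI}$ at the value of $L_n$ for which $\eta_n^0=\zeta$ yields exactly $\theta_0=(\lambda+\vartheta_{n,2}^{-1})^{-1}$ and $\theta_I=(\lambda+\vartheta_{n,3}^{-1})^{-1}$ with $\lambda=\vartheta_{n,4}(\zeta^{-1}-\vartheta_1^{-1})$, the quantities already appearing in Propositions \ref{prop3} and \ref{prop4}. A short manipulation (clearing the positive denominator $\Xi(\omega_i)^{-1}-\vartheta_{n,2}^{-1}$) then gives $\frac{\lambda}{\Xi(\omega_i)^{-1}-\vartheta_{n,2}^{-1}}\geq1\Leftrightarrow\Xi(\omega_i)\geq\theta_0\Leftrightarrow\ell_p\geq\ell_0$ and, likewise, $\frac{\lambda}{\Xi(\omega_i)^{-1}-\vartheta_{n,3}^{-1}}<1\Leftrightarrow\ell_{pI}<\ell_0$. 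The four cases of the theorem are thus precisely the four admissible orderings of $\{\ell_0,\ell_p,\ell_{pI}\}$ (the combination $\ell_p\geq\ell_0>\ell_{pI}$ being excluded by $\ell_p<\ell_{pI}$), together with the degenerate regime $\zeta>\vartheta_1$ in which the common stream is never decodable, i.e. $\ell_0=\infty$.

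The core is then a law-of-total-expectation argument applied to the definition \eqref{condef}. The event $\epsilon_n$ on which $\R_n>0$ is $\{L_n>\min(\ell_0,\ell_{pI})\}$: either the common stream is decoded ($L_n>\ell_0$, contributing $\R_n^{s_0}$ plus a private term whenever $\ell_p$ is also cleared), or it fails while the private stream survives the extra interference ($\ell_{pI}<L_n<\ell_0$, contributing $\R_n^{pI}$). Conditioning on $\epsilon_n$ and splitting according to whether $L_n$ exceeds the second threshold makes each conditional expectation coincide with one of $\R_n^{s_0}$, $\R_n^p$, $\R_n^{pI}$, the nested conditioning probabilities telescoping into ratios of coverage probabilities. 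For instance, in Case 1 the nesting $\ell_0\leq\ell_p$ collapses $\mathbb{P}[\eta_n^p>\Xi(\omega_i),\eta_n^0>\zeta]=\pi_{\eta_n^p}(\Xi(\omega_i))$, producing the weight $\pi_{\eta_n^p}(\Xi(\omega_i))/\pi_{\eta_n^0}(\zeta)$; in Case 3 the conditioning is instead on $\{L_n>\ell_{pI}\}$ and the split at $\ell_0$ generates the weight $\pi_{\eta_n^0}(\zeta)/\pi_{\eta_n^{pI}}(\Xi(\omega_i))$ and its complement, recovering exactly the stated convex combination.

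I expect the main obstacle to be bookkeeping the normalizations consistently, since each of $\R_n^{s_0}$, $\R_n^p$, $\R_n^{pI}$ is itself a conditional rate carrying its own denominator --- respectively $\pi_{\eta_n^0}(\zeta)$, and in Propositions \ref{prop3}/\ref{prop4} one of $\pi_{\eta_n^p}(\Xi(\omega_i))$, $\pi_{\eta_n^0}(\zeta)$, or $\pi_{\eta_n^{pI}}(\Xi(\omega_i))-\pi_{\eta_n^0}(\zeta)$. Recombining these into a single rate conditioned on $\epsilon_n$ requires checking that the integration limits $\theta_0$ and $\theta_I$ and the probability prefactors are precisely those for which the integration ranges $\{\eta_n^p>\theta_0\}$ and $\{\Xi(\omega_i)<\eta_n^{pI}<\theta_I\}$ coincide with the $L_n$-intervals $\{L_n>\ell_0\}$ and $\{\ell_{pI}<L_n<\ell_0\}$; this matching is exactly what guarantees that no probability mass is double counted or dropped in the telescoping.
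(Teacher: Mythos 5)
Your proposal is correct and follows essentially the same route as the paper's proof: the paper likewise decomposes into the cases (i) only $s_0$, (ii) $s_0$ and $s_n$, (iii) only $s_n$ with interference, identifies which cases survive by comparing the decoding thresholds (equivalently, the coverage probabilities $\pi_{\eta_n^0}(\zeta)$, $\pi_{\eta_n^p}(\Xi(\omega_i))$, $\pi_{\eta_n^{pI}}(\Xi(\omega_i))$, whose orderings are exactly your $\ell$-orderings since the thresholds on $h_n$ share the common factor $\sigma^2(1+d_n^\alpha)$), and combines the conditional rates with the same telescoping weights $\pi_{\eta_n^p}/\pi_{\eta_n^0}$ and $\pi_{\eta_n^0}/\pi_{\eta_n^{pI}}$. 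Your explicit framing via monotonicity in $L_n$ and the exclusion of the ordering $\ell_p\geq\ell_0>\ell_{pI}$ is a slightly cleaner statement of the same nesting argument the paper uses implicitly.
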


\begin{proof}
	See Appendix \ref{proofth}.
\end{proof}

\subsection{Achieved Rates: IIC is feasible}
We now consider the case where IIC is employed at the $n$-th receiver. We first focus on the case where IIC is employed at the center receiver i.e., $n=c$ and derive the common rate for the case where both receivers decode $s_{0}$, which is evaluated similar to equation \eqref{commonR} and is given as follows 
\begin{align}\label{IICcommonR1}
&\R^{0,{\rm{IIC}}_c}_{{b}}=\frac{1}{\pi_{\eta_{c,\rm{IIC}}^0}\left(\zeta\right)\pi_{\eta_e^0}\left(\zeta\right)}\nonumber\\
&\times\Big(\int_{\zeta}^{\vartheta_1^{-}}\int_{\zeta}^{y}R(\omega_1,x)g_{\eta_{c,\rm{IIC}}^0}(x)g_{\eta_e^0}(y)\,dx\,dy+\int_{\zeta}^{\vartheta_{c,4}^{-}}\int_{\zeta}^{\min\{x,\vartheta_1^{-}\}}R(\omega_1,y)g_{\eta_e^0}(y)g_{\eta_{c,\rm{IIC}}^0}(x)\,dy\,dx\Big),
\end{align}
where $g_{\eta_{c,\rm{IIC}}^0}(t)$ and $g_{\eta^0_{e}}(t)$ are  provided in Appendix \ref{app1}. Note that, different from $\R^0_{b}$, in this case, the upper bounds of $g_{\eta_c^0}(t)$ and $g_{\eta^0_{2,\rm{IIC}}}(t)$ are not equal i.e., $\vartheta_1\leq\vartheta_{n,4}$, which requires the $\min$ operation in the second integral. For the case where the center receiver decodes the stream $s_{0}$, while the edge receiver does not, then the common rate at the center receiver is given by 
\begin{equation}
\R_c^{0,{\rm{IIC}}_c}=\frac{1}{\pi_{\eta_{c,\rm{IIC}}^0}(\zeta)}\int_{\zeta}^{\vartheta_{c,4}^{-}}R(\omega_1,t)g_{\eta^0_{c,\rm{IIC}}}(t)\,dt.
\end{equation}
As such, the rate for decoding the common stream at the center receiver is given by 
\begin{equation}
\R_c^{s_{0},{\rm{IIC}}_c}(\omega_i)=\pi_{\eta_e^0}(\zeta)\omega_i u\R^{0,{\rm{IIC}}_c}_{{b}}+ \left(1-\pi_{\eta_e^0}(\zeta)\right)\omega_i\R_c^{0,{\rm{IIC}}_c}.
\end{equation}
Clearly, the common rate achieved at the edge receiver is indirectly affected by the IIC at the center receiver and is given as follows 
\begin{align}
\R_e^{s_{0},{\rm{IIC}}_c}(\omega_i)=\pi_{\eta_{c,\rm{IIC}}^0}(\zeta)\omega_i &(1-u)\R^{0,{\rm{IIC}}_c}_{{b}}+ \left(1-\pi_{\eta_{c,\rm{IIC}}^0}(\zeta)\right)\omega_i\R_e^0.
\end{align}
Similarly, with IIC employed by the edge receiver, the common rate when both receivers decode $s_{0}$, is evaluated by 
\begin{align}\label{IICcommonR2}
&\R^{0,{\rm{IIC}}_e}_{{b}}=\frac{1}{\pi_{\eta_{c}^0}\left(\zeta\right)\pi_{\eta_{e,\rm{IIC}}^0}\left(\zeta\right)}\nonumber\\
&\times\Big(\int_{\zeta}^{\vartheta_{e,4}^{-}}\int_{\zeta}^{\min\{y,\vartheta_1^{-}\}}R(\omega_1,x)g_{\eta_c^0}(x)g_{\eta^0_{e,\rm{IIC}}}(y)\,dx\,dy+ \int_{\zeta}^{\vartheta_1^{-}}\int_{\zeta}^{x}R(\omega_1,y)g_{\eta^0_{e,\rm{IIC}}}(y)g_{\eta_c^0}(x)\,dy\,dx\Big),
\end{align}
and by 
\begin{equation}
\R_e^{0,{\rm{IIC}_e}}=\frac{1}{\pi_{\eta_{e,\rm{IIC}}^0}(\zeta)}\int_{\zeta}^{\vartheta_{e,4}^{-}}R(\omega_1,t)g_{\eta^0_{e,\rm{IIC}}}(t)\,dt,
\end{equation}
in the case where the edge receiver decodes the common stream while the center receiver does not. As such, the rates for decoding $s_{0}$ are given by 
\begin{equation}\label{com1IIC}
\R_c^{s_{0},{\rm{IIC}_e}}(\omega_i)=\pi_{\eta^0_{e,\rm{IIC}}}(\zeta)\omega_i u\R_{b}^{0,{\rm{IIC}_e}}+ \left(1-\pi_{\eta^0_{e,\rm{IIC}}}(\zeta)\right)\omega_i\R_c^0,
\end{equation}
and by 
\begin{align}
\R_e^{s_{0},{\rm{IIC}_e}}(\omega_i)=\pi_{\eta_c^0}(\zeta)\omega_i &(1-u)\R_{b}^{0,{\rm{IIC}_e}}+ \left(1-\pi_{\eta_c^0}(\zeta)\right)\omega_i\R_e^{0,{\rm{IIC}_e}},
\end{align}
at the center and edge receiver, respectively.

We now proceed to the derivation of the rate achieved when the receiver $n$ employs IIC and decodes the private stream. 
\begin{prop}\label{prop5}
	The rate for decoding the stream $s_n$, $n \in \{c,e\}$, when $s_0$ is successfully decoded and IIC is employed at the $n$-th receiver, is given by 
	\begin{equation}
	\R_n^{p,{\rm{IIC}}_n}(\omega_i)=\frac{1}{\pi_{\eta_{n,\rm{IIC}}^p}(\Xi(\omega_i))}\int_{\Xi(\omega_i)}^{\infty}R(\omega_i,t)g_{\eta_{n,\rm{IIC}}^p}(t)\,dt,
	\end{equation}
	if $\Xi(\omega_i)\left(\vartheta_{n,4}\zeta^{-1}-1\right)\geq1,  \zeta< \vartheta_{n,4}$, and by
	\begin{equation}
	\R_n^{p,{\rm{IIC}}_n}(\omega_i)=\frac{1}{\pi_{\eta_{n,\rm{IIC}}^0}\left(\zeta\right)}\int_{(\vartheta_{n,4}\zeta^{-1}-1)^{-1}}^{\infty}R(\omega_i,t)g_{\eta_{n,{\rm{IIC}}}^p}(t)\,dt,
	\end{equation}
	if $\Xi(\omega_i)\left(\vartheta_{n,4}\zeta^{-1}-1\right)<1,  \zeta< \vartheta_{n,4}$; otherwise $R_n^p=0$.
\end{prop}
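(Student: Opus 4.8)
The plan is to exploit the fact that, once IIC is applied, the two SINRs that govern the decoding of the two streams at the $n$-th receiver are both deterministic functions of the single path-loss/fading variable $L_n$, and hence of one another. Starting from $\eta_{n,\rm{IIC}}^p=p_n L_n/\sigma^2$ and $\eta_{n,\rm{IIC}}^0=p_0/(p_n+\sigma^2 L_n^{-1})$ and eliminating $L_n$, I would obtain the exact relation $\eta_{n,\rm{IIC}}^0=\vartheta_{n,4}\,\eta_{n,\rm{IIC}}^p/(1+\eta_{n,\rm{IIC}}^p)$, where $\vartheta_{n,4}=p_0/p_n$. This is the crux of the argument: the joint decoding event $\{\eta_{n,\rm{IIC}}^0>\zeta\}\cap\{\eta_{n,\rm{IIC}}^p>\Xi(\omega_i)\}$ entering the definition of $\R_n^{p,{\rm{IIC}}_n}$ can then be collapsed into a single condition on $\eta_{n,\rm{IIC}}^p$, so the rate reduces to a one-dimensional integral against $g_{\eta_{n,\rm{IIC}}^p}$.

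Next I would translate the common-stream condition. Since the right-hand side above is strictly increasing in $\eta_{n,\rm{IIC}}^p$ and bounded above by $\vartheta_{n,4}$, the requirement $\eta_{n,\rm{IIC}}^0>\zeta$ is feasible only when $\zeta<\vartheta_{n,4}$; solving the inequality then yields the equivalent condition $\eta_{n,\rm{IIC}}^p>(\vartheta_{n,4}\zeta^{-1}-1)^{-1}$. Combining this with the private-stream condition $\eta_{n,\rm{IIC}}^p>\Xi(\omega_i)$, the joint event becomes $\eta_{n,\rm{IIC}}^p>\max\{\Xi(\omega_i),(\vartheta_{n,4}\zeta^{-1}-1)^{-1}\}$. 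The two regimes in the statement are precisely the two orderings of these thresholds, separated by the sign of $\Xi(\omega_i)(\vartheta_{n,4}\zeta^{-1}-1)-1$: when this quantity is non-negative the private threshold dominates, otherwise the common threshold does.

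It then remains to substitute into the definition \eqref{condef} in each regime. The numerator is $\int R(\omega_i,t)g_{\eta_{n,\rm{IIC}}^p}(t)\,dt$ taken from the dominating threshold up to $\infty$, the upper limit being $\infty$ precisely because, unlike the other SINRs, $\eta_{n,\rm{IIC}}^p=p_n L_n/\sigma^2$ has no interference term and is therefore unbounded. For the normalizing probability $\mathbb{P}[\epsilon]$ I would note that in the first regime the event coincides with $\{\eta_{n,\rm{IIC}}^p>\Xi(\omega_i)\}$, giving $\pi_{\eta_{n,\rm{IIC}}^p}(\Xi(\omega_i))$, whereas in the second regime the deterministic equivalence above identifies $\{\eta_{n,\rm{IIC}}^p>(\vartheta_{n,4}\zeta^{-1}-1)^{-1}\}$ with $\{\eta_{n,\rm{IIC}}^0>\zeta\}$, so the denominator is $\pi_{\eta_{n,\rm{IIC}}^0}(\zeta)$. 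Finally, the degenerate case $\zeta\geq\vartheta_{n,4}$ forces $\eta_{n,\rm{IIC}}^0<\vartheta_{n,4}\leq\zeta$ almost surely, so the common stream is never decoded and $\R_n^{p,{\rm{IIC}}_n}=0$.

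The main obstacle is really only the first observation, namely recognizing and correctly inverting the deterministic coupling between $\eta_{n,\rm{IIC}}^0$ and $\eta_{n,\rm{IIC}}^p$; once that relation and its boundedness by $\vartheta_{n,4}$ are in hand, the case split and the two integral expressions follow by elementary threshold comparison and direct substitution into \eqref{condef}, with the only point demanding care being the correct identification of $\mathbb{P}[\epsilon]$ in each regime.
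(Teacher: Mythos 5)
Your proposal is correct and follows essentially the same route as the paper: the paper's proof (of Proposition 3, to which Proposition 5 is explicitly analogous) likewise collapses the joint event $\{\eta^0>\zeta,\eta^p>\Xi(\omega_i)\}$ into a single threshold on the common underlying channel variable, splits cases according to which threshold is binding, and normalizes by the probability of the binding event. Your explicit elimination of $L_n$ to get $\eta_{n,\rm{IIC}}^0=\vartheta_{n,4}\,\eta_{n,\rm{IIC}}^p/(1+\eta_{n,\rm{IIC}}^p)$ is just a cleaner packaging of the same inequality comparison the paper performs on $h_c$, and all thresholds, denominators, the unbounded upper limit, and the degenerate case $\zeta\geq\vartheta_{n,4}$ check out.
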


\begin{proof}
	The proof is similar to the one provided for Proposition \ref{prop3}, given in Appendix \ref{proof3}.
\end{proof}

\begin{prop}\label{prop6}
	The rate for decoding the stream $s_n$, $n \in \{c,e\}$, when $s_0$ is not successfully decoded is given by 
	\begin{equation}
	\R_n^{pI,{\rm{IIC}}_n}(\omega_i)=\frac{1}{\pi_{\eta_{n,\rm{IIC}}^{pI}}(\Xi(\omega_i))}\int_{\Xi(\omega_i)}^{(1/\vartheta_{n,4})^{-}}\!\!\!\!\!R(\omega_i,t)g_{\eta_{n,\rm{IIC}}^{pI}}(t)\,dt,
	\end{equation}
	if $\zeta> \vartheta_{n,4}$, $\Xi(\omega_i)<\vartheta_{n,4}^{-1}$, and by 
	\begin{align}
	&\R_n^{pI,{\rm{IIC}}_n}(\omega_i)=\frac{1}{\pi_{\eta_{n,\rm{IIC}}^{pI}}(\Xi(\omega_i))-\pi_{\eta_{n,\rm{IIC}}^0}\left(\zeta\right)}\int_{\Xi(\omega_i)}^{\left(\vartheta_{n,4}(\zeta^{-1}+1)-1\right)^{-1}}R(\omega_i,t)g_{\eta_{n,\rm{IIC}}^{pI}}(t)\,dt,
	\end{align}
	if $\zeta< \vartheta_{n,4}, \frac{\vartheta_{n,4} \zeta^{-1}-1}{\left(\Xi(\omega_i)^{-1}-\vartheta_{n,4}\right)}<1$, $\Xi(\omega_i)<\left(\vartheta_{n,4}(\zeta^{-1}+1)-1\right)^{-1}$; otherwise $R_{n}^{pI,\rm{IIC}_n}=0$. 
\end{prop}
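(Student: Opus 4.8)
The plan is to begin from the conditional-rate definition in \eqref{condef}, instantiating the event $\epsilon$ with the IIC decoding rule: the common stream is \emph{not} removed, so $\eta_{n,\rm{IIC}}^0<\zeta$, and the private stream decodes only when $\eta_{n,\rm{IIC}}^{pI}>\Xi(\omega_i)$. Hence $\R_n^{pI,{\rm{IIC}}_n}(\omega_i)=\mathbb{E}\!\left[R(\omega_i,\eta_{n,\rm{IIC}}^{pI})\mid \eta_{n,\rm{IIC}}^0<\zeta,\,\eta_{n,\rm{IIC}}^{pI}>\Xi(\omega_i)\right]$, which by \eqref{condef} is the ratio of $\mathbb{E}[\mathbf{1}\{\epsilon\}R(\omega_i,\eta_{n,\rm{IIC}}^{pI})]$ to $\mathbb{P}[\epsilon]$. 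The decisive observation, paralleling the argument for Proposition~\ref{prop4} in Appendix~\ref{proof4}, is that $\eta_{n,\rm{IIC}}^0=p_0/(p_n+\sigma^2L_n^{-1})$ and $\eta_{n,\rm{IIC}}^{pI}=p_n/(p_0+\sigma^2L_n^{-1})$ are both strictly decreasing deterministic functions of the single random quantity $\sigma^2L_n^{-1}$. They are therefore monotonically coupled, so the two-SINR conditioning event collapses to an interval event for the single variable $\eta_{n,\rm{IIC}}^{pI}$, which is what makes the one-dimensional integrals possible.

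Next I would eliminate $\eta_{n,\rm{IIC}}^0$ in favour of $\eta_{n,\rm{IIC}}^{pI}$. Solving $\eta_{n,\rm{IIC}}^0=\zeta$ for $\sigma^2L_n^{-1}$ gives $\sigma^2L_n^{-1}=p_0/\zeta-p_n$; substituting into $\eta_{n,\rm{IIC}}^{pI}$ and normalising by $p_n$ yields the crossover level $\theta^\star=\left(\vartheta_{n,4}(\zeta^{-1}+1)-1\right)^{-1}$, with $\vartheta_{n,4}=p_0/p_n$. Because both SINRs decrease in $\sigma^2L_n^{-1}$, the events $\{\eta_{n,\rm{IIC}}^0<\zeta\}$ and $\{\eta_{n,\rm{IIC}}^{pI}<\theta^\star\}$ coincide (equivalently $\{\eta_{n,\rm{IIC}}^0\geq\zeta\}=\{\eta_{n,\rm{IIC}}^{pI}\geq\theta^\star\}$). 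This identity is the technical crux, and the step I would be most careful with, since it is what later lets the denominator be written as a difference of coverage probabilities.

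I would then split on the sign of $\zeta-\vartheta_{n,4}$. If $\zeta>\vartheta_{n,4}$, then $\eta_{n,\rm{IIC}}^0\leq\vartheta_{n,4}<\zeta$ holds almost surely, so $s_0$ is never decoded and the constraint $\eta_{n,\rm{IIC}}^0<\zeta$ is vacuous; the event reduces to $\{\eta_{n,\rm{IIC}}^{pI}>\Xi(\omega_i)\}$, whose probability is $\pi_{\eta_{n,\rm{IIC}}^{pI}}(\Xi(\omega_i))$, and $R(\omega_i,\cdot)$ is integrated against $g_{\eta_{n,\rm{IIC}}^{pI}}$ from $\Xi(\omega_i)$ up to the support edge $\vartheta_{n,4}^{-1}$, giving the first expression, nontrivial exactly when $\Xi(\omega_i)<\vartheta_{n,4}^{-1}$. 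If instead $\zeta<\vartheta_{n,4}$, the event is the genuine interval $\{\Xi(\omega_i)<\eta_{n,\rm{IIC}}^{pI}<\theta^\star\}$, so the numerator integrates from $\Xi(\omega_i)$ to $\theta^\star$; using the identity of the previous step, $\mathbb{P}[\eta_{n,\rm{IIC}}^{pI}\geq\theta^\star]=\mathbb{P}[\eta_{n,\rm{IIC}}^0\geq\zeta]=\pi_{\eta_{n,\rm{IIC}}^0}(\zeta)$, whence $\mathbb{P}[\epsilon]=\pi_{\eta_{n,\rm{IIC}}^{pI}}(\Xi(\omega_i))-\pi_{\eta_{n,\rm{IIC}}^0}(\zeta)$, recovering the second expression with upper limit $\theta^\star$.

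Finally I would reconcile the nontriviality conditions: demanding a nonempty interval $\Xi(\omega_i)<\theta^\star$ is, given $\zeta<\vartheta_{n,4}$ (so that $\vartheta_{n,4}\zeta^{-1}-1>0$) and $\Xi(\omega_i)<\vartheta_{n,4}^{-1}$, equivalent after cross-multiplication to the stated inequality $\big(\vartheta_{n,4}\zeta^{-1}-1\big)/\big(\Xi(\omega_i)^{-1}-\vartheta_{n,4}\big)<1$; whenever these fail, $\mathbb{P}[\epsilon]=0$ and the convention in \eqref{condef} sets $\R_n^{pI,{\rm{IIC}}_n}=0$. Apart from the monotone-coupling identity, every remaining manipulation is one of the single-integral reductions already carried out for Proposition~\ref{prop4}, so I would simply invoke that template once the event has been recast in terms of $\eta_{n,\rm{IIC}}^{pI}$ alone.
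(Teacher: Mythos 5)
Your proposal is correct and follows essentially the same route as the paper, which proves Proposition 6 by transplanting the Appendix~\ref{proof4} argument for Proposition~\ref{prop4}: both conditioning SINRs are monotone in the single channel quantity $L_n$ (the paper writes this as nested thresholds on $h_c$ given $d_c$, you as monotone coupling through $\sigma^2 L_n^{-1}$), so the joint event collapses to an interval for $\eta_{n,\rm{IIC}}^{pI}$ with crossover $\left(\vartheta_{n,4}(\zeta^{-1}+1)-1\right)^{-1}$, the case split on $\zeta\lessgtr\vartheta_{n,4}$ replaces the split on $\zeta\lessgtr\vartheta_1$, and the denominator becomes either $\pi_{\eta_{n,\rm{IIC}}^{pI}}(\Xi(\omega_i))$ or the difference of coverage probabilities. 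Your verification of the upper integration limit and of the equivalence between $\Xi(\omega_i)<\theta^\star$ and the stated ratio condition matches the paper's computation.
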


\begin{proof}
	The proof is similar to the one provided for Proposition \ref{prop4}, given in Appendix \ref{proof4}.
\end{proof}

Provided with the above, we obtain in the following theorem, the rate achieved at each receiver.

\begin{theorem}
	The rate achieved at the $n$-th receiver employing IIC, for decoding the streams $s_0$ and/or $s_n$, $n \in \{c,e\}$ is given by
	\begin{equation}
	\R_n^{\rm{IIC}_n}(\omega_i)=\R_n^{s_0,{\rm{IIC}}_n}(\omega_i)+\frac{\pi_{\eta_{n,\rm{IIC}}^p}(\Xi(\omega_i))}{\pi_{\eta_{n,\rm{IIC}}^0}(\zeta)}\R_n^{p,{\rm{IIC}}_n}(\omega_i),
	\end{equation}
	if $\zeta< \vartheta_{n,4}, \frac{\vartheta_{n,4} \zeta^{-1}-1}{\Xi(\omega_i)^{-1}-\vartheta_{n,4}}>1$, $\Xi(\omega_i)\left(\vartheta_{n,4}\zeta^{-1}-1\right)\geq1$, by
	\begin{equation}
	\R_n^{\rm{IIC}_n}(\omega_i)=\R_n^{s_0,{\rm{IIC}}_n}(\omega_i)+\R_n^{p,{\rm{IIC}}_n}(\omega_i),
	\end{equation}
	if $\zeta< \vartheta_{n,4}, \frac{\vartheta_{n,4} \zeta^{-1}-1}{\Xi(\omega_i)^{-1}-\vartheta_{n,4}}>1$, $\Xi(\omega_i)\left(\vartheta_{n,4}\zeta^{-1}-1\right)<1$, by 
	\begin{align}
	&\R_n^{\rm{IIC}_n}(\omega_i)=\frac{\pi_{\eta_{n,\rm{IIC}}^0}\left(\zeta\right)}{\pi_{\eta_{n,\rm{IIC}}^{pI}}(\Xi(\omega_i))}\left(\R_n^{s_0,{\rm{IIC}}_n}(\omega_i)+\R_n^{p,\rm{IIC}_n}(\omega_i)\right)+\left(1-\frac{\pi_{\eta_{n,\rm{IIC}}^0}\left(\zeta\right)}{\pi_{\eta_{n,\rm{IIC}}^{pI}}(\Xi(\omega_i))}\right)\R_n^{pI,\rm{IIC}_n}(\omega_i),
	\end{align}
	if $\zeta< \vartheta_{n,4}, \frac{\vartheta_{n,4} \zeta^{-1}-1}{\Xi(\omega_i)^{-1}-\vartheta_{n,4}}<1$, and by 
	\begin{equation}
	\R_n^{\rm{IIC}_n}(\omega_i)=\R_n^{pI,\rm{IIC}_n}(\omega_i),
	\end{equation}
	if $\zeta> \vartheta_{n,4}$; otherwise $\R_n^{\rm{IIC}_n}=0$.
\end{theorem}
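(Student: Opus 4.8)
The plan is to mirror the argument of Appendix~\ref{proofth} used for Theorem~\ref{theorem1}, substituting the SINRs $\eta_n^0,\eta_n^p,\eta_n^{pI}$ by their IIC counterparts $\eta_{n,\rm{IIC}}^0,\eta_{n,\rm{IIC}}^p,\eta_{n,\rm{IIC}}^{pI}$ and the common-stream bound $\vartheta_1$ by $\vartheta_{n,4}$. The key structural fact is that, for a fixed receiver, all three IIC SINRs are strictly decreasing deterministic functions of the single random quantity $x\triangleq\sigma^2 L_n^{-1}$, namely $\eta_{n,\rm{IIC}}^0=p_0/(p_n+x)$, $\eta_{n,\rm{IIC}}^p=p_n/x$ and $\eta_{n,\rm{IIC}}^{pI}=p_n/(p_0+x)$. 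Hence every decoding event is a half-line $\{x<\cdot\}$, and I would record the three thresholds obtained from $\eta_{n,\rm{IIC}}^0>\zeta$, $\eta_{n,\rm{IIC}}^p>\Xi(\omega_i)$ and $\eta_{n,\rm{IIC}}^{pI}>\Xi(\omega_i)$; a one-line computation gives $x_0=p_n(\vartheta_{n,4}\zeta^{-1}-1)$, $x_p=p_n\Xi(\omega_i)^{-1}$ and $x_{pI}=p_n(\Xi(\omega_i)^{-1}-\vartheta_{n,4})$, from which one reads off $x_{pI}<x_p$ unconditionally and $x_0>0\iff\zeta<\vartheta_{n,4}$.

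The four regimes of the theorem are then exactly the admissible orderings of $x_0,x_{pI},x_p$. I would first dispose of $\zeta>\vartheta_{n,4}$: here $x_0\le 0$, the common stream is never decoded, and the receiver is served precisely on $\{x<x_{pI}\}$ through the interference path, giving $\R_n^{pI,\rm{IIC}_n}$ from Proposition~\ref{prop6} and hence the last case. For $\zeta<\vartheta_{n,4}$ I would split according to the sign of $x_0-x_{pI}$ and, when $x_0>x_{pI}$, further according to the sign of $x_0-x_p$; these two comparisons are precisely the fractional conditions $\frac{\vartheta_{n,4}\zeta^{-1}-1}{\Xi(\omega_i)^{-1}-\vartheta_{n,4}}$ and $\Xi(\omega_i)(\vartheta_{n,4}\zeta^{-1}-1)$ against $1$ that appear in the statement.

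The core is the law of total expectation applied via $\R=\mathbb{E}[\mathbf{1}\{\epsilon\}R]/\mathbb{P}[\epsilon]$ from \eqref{condef}, where $\epsilon$ is the event on which the receiver attains a positive rate. When $x_0>x_{pI}$ the interference band $\{x_0<x<x_{pI}\}$ is empty, so $\epsilon=\{x<x_0\}$ and $\R_n^{\rm{IIC}_n}=\mathbb{E}[\text{common}+\text{private}\cdot\mathbf{1}\{\text{private decoded}\}\mid x<x_0]$; the common part is $\R_n^{s_0,\rm{IIC}_n}$ assembled from \eqref{IICcommonR1} (for $n=c$) or \eqref{IICcommonR2} (for $n=e$). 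If additionally $x_0\ge x_p$ the private stream is decoded on the strict subset $\{x<x_p\}$, so its contribution is the first form of $\R_n^{p,\rm{IIC}_n}$ (normalised by $\pi_{\eta_{n,\rm{IIC}}^p}(\Xi(\omega_i))$) reweighted by $\pi_{\eta_{n,\rm{IIC}}^p}(\Xi(\omega_i))/\pi_{\eta_{n,\rm{IIC}}^0}(\zeta)$, producing the first case; if instead $x_0<x_p$ the private stream is decoded on all of $\{x<x_0\}$, the second form of $\R_n^{p,\rm{IIC}_n}$ (already normalised by $\pi_{\eta_{n,\rm{IIC}}^0}(\zeta)$) applies and the reweighting collapses to $1$, producing the second case. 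When $x_0<x_{pI}$ the receiver is served on the larger set $\{x<x_{pI}\}$, so $\epsilon=\{x<x_{pI}\}$; I would split this into $\{x<x_0\}$ (common decoded, and private decoded via $\eta_{n,\rm{IIC}}^p$ because $x_0<x_{pI}<x_p$) and $\{x_0<x<x_{pI}\}$ (common not decoded, private via $\eta_{n,\rm{IIC}}^{pI}$). Conditioning on these sub-events, whose conditional probabilities are $\pi_{\eta_{n,\rm{IIC}}^0}(\zeta)/\pi_{\eta_{n,\rm{IIC}}^{pI}}(\Xi(\omega_i))$ and its complement, and using that the common rate vanishes on the second, yields the convex combination of the third case.

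I expect the main obstacle to be the bookkeeping of $\epsilon$, which silently enlarges from $\{\text{common decoded}\}$ in the first two cases to $\{\text{private decoded}\}$ in the third: this is exactly what forces the ratio $\pi_{\eta_{n,\rm{IIC}}^0}(\zeta)/\pi_{\eta_{n,\rm{IIC}}^{pI}}(\Xi(\omega_i))$ to appear, and it must be reconciled with the normalisations already carried inside $\R_n^{s_0,\rm{IIC}_n}$, $\R_n^{p,\rm{IIC}_n}$ and $\R_n^{pI,\rm{IIC}_n}$ so that every conditional rate is re-expressed against the common denominator $\mathbb{P}[\epsilon]$. A secondary subtlety is that I would phrase the whole case split directly through the orderings of $x_0,x_{pI},x_p$ rather than the fractional conditions: when a denominator such as $\Xi(\omega_i)^{-1}-\vartheta_{n,4}$ is non-positive the corresponding interference band degenerates to the empty set, and these degenerate situations are simply absorbed into the otherwise-zero clause. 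Everything else reduces to the routine substitution of the IIC SINR distributions from Appendix~\ref{app1}.
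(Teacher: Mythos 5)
Your proposal is correct and follows essentially the same route as the paper, whose proof of this theorem is simply the argument of Appendix~\ref{proofth} with $\eta_n^0,\eta_n^p,\eta_n^{pI}$ replaced by their IIC counterparts and $\vartheta_1$ by $\vartheta_{n,4}$; your thresholds $x_0,x_p,x_{pI}$ and the resulting case distinctions match the theorem's conditions exactly. Recasting the case split as an ordering of scalar thresholds on $\sigma^2 L_n^{-1}$ is a cleaner bookkeeping device than the paper's, but it is the same underlying decomposition via \eqref{condef}.
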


\begin{proof}
	The proof is similar to the one given for Theorem \ref{theorem1}, see Appendix \ref{proofth}.
\end{proof}
Finally, for the case where the $k$-th receiver employs IIC, the rate of the $n$-th receiver, denoted by $\R_n^{\rm{IIC}_k}$, $k\neq n$, is evaluated by Theorem \ref{theorem1} with the substitution of $\R_n^{s_0}$ with $\R_n^{s_0,\rm{IIC}_k}$.

\subsection{Sum rate of the typical receivers}
We now focus on the performance for each of the considered modes; (i) All MPC (ii) CC/MPC, (iii) MPC/CC, and (iv) All CC. By operating in each mode, and based on the receivers' requests, the considered communication techniques may be combined, including XOR transmissions and IIC. In the following theorem, we provide the sum rate achieved for each subcase.
\begin{theorem}{\label{sum}}
	When IIC is not employed, the sum rate is given by 
	\begin{equation}
	\!\!\!\R_\text{sum}(\omega_i,\omega_j)=\frac{q_c\R_c(\omega_i)+q_e\R_e(\omega_j)}{q_c+q_e-q_cq_e},\hspace{+5mm}i,j \in \{1,2,3\}\!
	\end{equation} 
	where $\omega_i$ and $\omega_j$ is the pre-log factor corresponding to the request of the center and edge receivers, respectively and 
	\begin{equation}
	q_n=\begin{cases}
	\pi_{\eta_{n}^0}(\zeta), & \zeta< \vartheta_1, \frac{\lambda}{\Xi(\omega_i)^{-1}-\vartheta_{n,3}^{-1}}>1,\\
	\pi_{\eta_n^{pI}}(\Xi(\omega)), & \zeta< \vartheta_1, \frac{\lambda}{\Xi(\omega_i)^{-1}-\vartheta_{n,3}^{-1}}<1 \rm{\,\,or\,\,} \zeta>\vartheta_1.
	\end{cases}
	\end{equation}
	If IIC is employed at the $k$-th receiver, $k\neq n$, the sum rate is given by
	\begin{equation}
	\R^{\rm{IIC}_k}_\text{sum}(\omega_l)=\frac{q_n\R^{\rm{IIC}_k}_n(\omega_l)+q^{\rm{IIC}}_k\R^{\rm{IIC}_k}_k(\omega_1)}{q_n+q^{\rm{IIC}}_k-q_nq^{\rm{IIC}}_k},
	\end{equation}
	where $\frac{1}{\omega_l}$, $l \in \{2,3\}$, denotes the file size of the request of the $n$-th receiver and
	\begin{equation}
	q^{\rm{IIC}}_k=\begin{cases}
	\pi_{\eta_{k,\rm{IIC}}^0}(\zeta), & \zeta< \vartheta_{n,4}, \frac{\vartheta_{n,4} \zeta^{-1}-1}{\Xi(\omega_i)^{-1}-\vartheta_{n,4}}>1,\\
	\pi_{\eta_{n,\rm{IIC}}^{pI}}(\Xi(\omega)), & \zeta< \vartheta_{n,4}, \frac{\vartheta_{n,4} \zeta^{-1}-1}{\Xi(\omega_i)^{-1}-\vartheta_{n,4}}<1 \rm{\,\,or\,\,} \zeta>\vartheta_{n,4},
	\end{cases}
	\end{equation}
	where $\lambda=\vartheta_{n,4}\left(\zeta^{-1}-\vartheta_1^{-1}\right)$.
\end{theorem}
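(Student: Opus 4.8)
The plan is to read $\R_\text{sum}$ as the rate conditioned on the transmission actually being triggered, i.e., on the event that at least one of the two typical receivers attains a positive rate. Let $R_c,R_e$ denote the instantaneous rates of the center and edge receivers (as in \eqref{rate}), and let $\epsilon_c,\epsilon_e$ be the events on which these rates are nonzero, so that the Theorem~\ref{theorem1} quantities are the conditional means $\R_n=\mathbb{E}[R_n\mid\epsilon_n]$. Since a signal is sent whenever a request from either class reaches the transmitter, I would set $\R_\text{sum}\triangleq\mathbb{E}\!\left[R_c+R_e \mid \epsilon_c\cup\epsilon_e\right]=\frac{\mathbb{E}\!\left[\mathbf{1}\{\epsilon_c\cup\epsilon_e\}(R_c+R_e)\right]}{\mathbb{P}\!\left[\epsilon_c\cup\epsilon_e\right]}$, mirroring the conditioning convention \eqref{condef}.

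The first step is to simplify the numerator. Because $R_n$ vanishes off $\epsilon_n$ and $\epsilon_n\subseteq\epsilon_c\cup\epsilon_e$, we have $\mathbf{1}\{\epsilon_c\cup\epsilon_e\}R_n=\mathbf{1}\{\epsilon_n\}R_n$ for $n\in\{c,e\}$, so the numerator collapses to $\mathbb{E}[\mathbf{1}\{\epsilon_c\}R_c]+\mathbb{E}[\mathbf{1}\{\epsilon_e\}R_e]$. Applying \eqref{condef} to each term gives $\mathbb{E}[\mathbf{1}\{\epsilon_n\}R_n]=\mathbb{P}[\epsilon_n]\,\R_n$, and setting $q_n\triangleq\mathbb{P}[\epsilon_n]$ yields $q_c\R_c+q_e\R_e$. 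The crux is to match $\mathbb{P}[\epsilon_n]$ with the case split defining $q_n$, by reading off the regimes of Theorem~\ref{theorem1}: when $\zeta<\vartheta_1$ with $\frac{\lambda}{\Xi(\omega_i)^{-1}-\vartheta_{n,3}^{-1}}>1$, the rate rests on common-stream decoding, so $\epsilon_n=\{\eta_n^0>\zeta\}$ and $q_n=\pi_{\eta_n^0}(\zeta)$; in the complementary regime ($\zeta<\vartheta_1$ with the ratio $<1$, or $\zeta>\vartheta_1$) the entire contribution comes from the private-with-interference branch, so $\epsilon_n=\{\eta_n^{pI}>\Xi(\omega)\}$ and $q_n=\pi_{\eta_n^{pI}}(\Xi(\omega))$.

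For the denominator I would invoke independence: each success event $\epsilon_n$ is a function of that receiver's own channel $L_n$ alone, and $L_c,L_e$ are independent (independent fading and independent locations), so $\epsilon_c\perp\epsilon_e$ and $\mathbb{P}[\epsilon_c\cup\epsilon_e]=q_c+q_e-q_cq_e$. Combining numerator and denominator gives the first claimed expression. The IIC case follows by the identical argument with a single substitution: for the receiver $k$ employing IIC, the success event and its probability are taken from the IIC branches (Propositions~\ref{prop5}--\ref{prop6}), so $\epsilon_k$ becomes $\{\eta_{k,\rm{IIC}}^0>\zeta\}$ or $\{\eta_{k,\rm{IIC}}^{pI}>\Xi(\omega)\}$ according to the comparison of $\zeta$ with $\vartheta_{n,4}$ and of $\frac{\vartheta_{n,4}\zeta^{-1}-1}{\Xi(\omega_i)^{-1}-\vartheta_{n,4}}$ with $1$, giving $q_k^{\rm{IIC}}$, while receiver $n$ retains $q_n$; independence again produces the denominator $q_n+q_k^{\rm{IIC}}-q_nq_k^{\rm{IIC}}$.

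The main obstacle I anticipate is pinning down the success event $\epsilon_n$ in the mixed regime of Theorem~\ref{theorem1}, where $\R_n$ is a convex combination of the common-plus-private and the private-with-interference contributions. There one must verify that the decoding region $\{\eta_n^{pI}>\Xi(\omega)\}$ already contains the common-stream coverage region $\{\eta_n^0>\zeta\}$ --- precisely the content of $\frac{\lambda}{\Xi(\omega_i)^{-1}-\vartheta_{n,3}^{-1}}<1$ --- so that the union of the two success events reduces to the single event $\{\eta_n^{pI}>\Xi(\omega)\}$ with probability $\pi_{\eta_n^{pI}}(\Xi(\omega))$, and analogously for the IIC case via $\frac{\vartheta_{n,4}\zeta^{-1}-1}{\Xi(\omega_i)^{-1}-\vartheta_{n,4}}<1$. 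Establishing this containment is the only nonroutine point; the remainder is bookkeeping with \eqref{condef} and the independence of the two links.
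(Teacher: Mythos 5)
Your proposal is correct and follows essentially the same route as the paper's own (much terser) proof: identify the per-receiver success probability $q_n$ from the regime split of Theorem~\ref{theorem1} (respectively the IIC propositions), use the independence of the two links to get $\mathbb{P}[\epsilon_c\cup\epsilon_e]=q_c+q_e-q_cq_e$, and form the conditional expectation as in \eqref{condef}; your explicit treatment of the mixed regime, where the containment $\{\eta_n^0>\zeta\}\subseteq\{\eta_n^{pI}>\Xi(\omega)\}$ makes $\pi_{\eta_n^{pI}}(\Xi(\omega))$ the correct success probability, is exactly the justification the paper leaves implicit. The only piece you omit is the paper's bookkeeping remark explaining why, in the IIC case, the IIC receiver necessarily places an EFR ($\omega_1$) while the other receiver's request has pre-log $\omega_l$ with $l\in\{2,3\}$, but this concerns the theorem's parameterization rather than the derivation of the formula.
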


\begin{proof}
	See Appendix \ref{proofsum}.
\end{proof}
Provided with the above, we present now the various operating modes of the proposed communication strategy and the rates that can be achieved when operating in any of the considered modes.
\subsubsection{All MPC}When both classes employ the MPC schemes, both the receivers place an EFR, while IIC is not feasible. As such the sum rate is evaluated by $\R_\text{sum}(\omega_1,\omega_1)$.
\subsubsection{CC/MPC}In this mode, while the edge receiver places an EFR, depending on the request of the scheduled center receiver, the following subcases might occur.
\begin{itemize}
	\item If all the $K$ requests of the center receivers are of rank $f\leq M$, then the sum rate is evaluated by $\R^{\rm{IIC}_e}_\text{sum}(\omega_3)$ i.e., XOR transmissions are employed with IIC performed by the edge receiver.
	\item If a single center receiver is scheduled and places a PFR with rank $f\leq M$, then the sum rate is $\R^{\rm{IIC}_e}_\text{sum}(\omega_2)$. That is, the remaining portion of the partially cached file is transmitted to the center receiver while the edge receiver performs IIC.
	\item If all the $K$ requests are of rank $f\leq N$ while they are not all of rank $f\leq M$, then the sum rate is given by $\R_\text{sum}(\omega_3,\omega_1)$ i.e., XOR transmissions are employed while IIC is not feasible.
	\item If a single center receiver is scheduled and requests for a file or rank $M<f\leq N$, then the sum rate is $\R_\text{sum}(\omega_2,\omega_1)$ i.e., a PFR is placed while IIC is not feasible.
	\item If a single receiver is scheduled requesting for a file of rank $f>N$ i.e., placing an EFR (different from the request of the edge receiver), then the rate is evaluated by $\R_\text{sum}(\omega_1,\omega_1)$, which is similar to the first mode rate. 
\end{itemize}
\begin{figure*}[t]	
	\begin{minipage}{0.48\textwidth}		
		\includegraphics[width=0.9\linewidth]{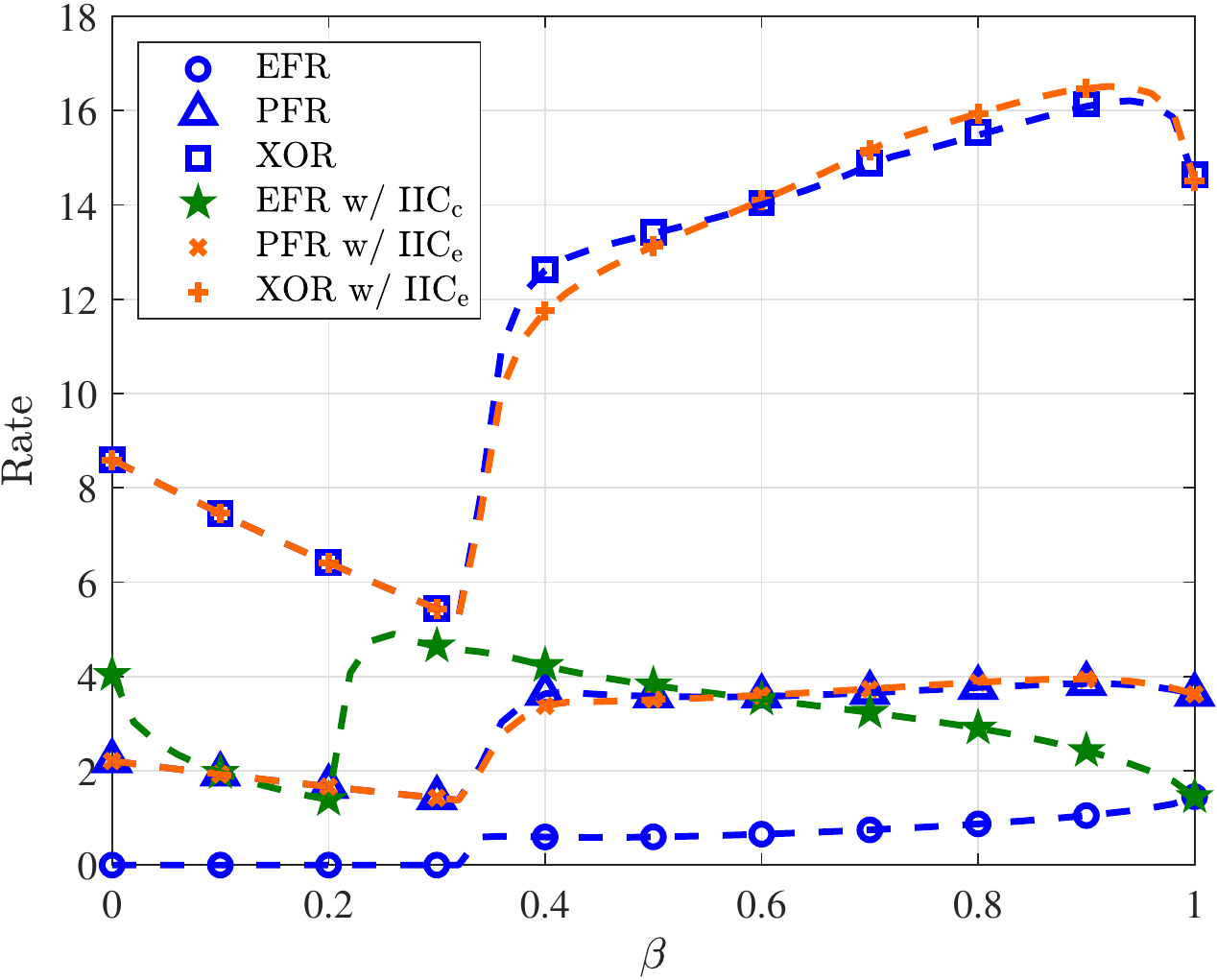}		
		\caption{The rate at the center receiver for all the subcases.}\label{fig2}
	\end{minipage}\hfill
	\begin{minipage}{0.48\textwidth}		
		\includegraphics[width=0.9\linewidth]{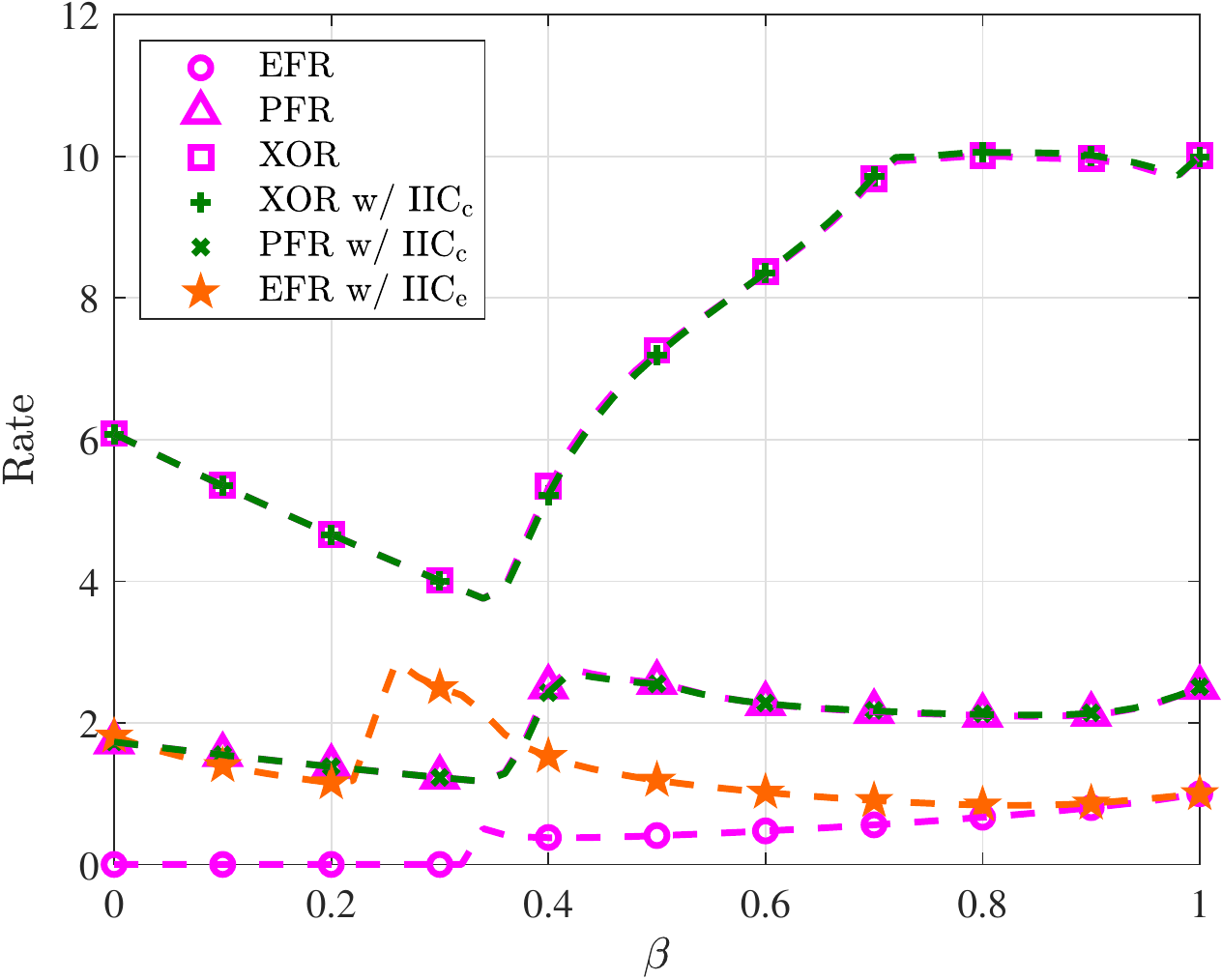}		
		\caption{The rate at the edge receiver for all the subcases.}\label{fig3}
	\end{minipage}\hfill	
\end{figure*}
\subsubsection{MPC/CC}With this mode, the center receiver places an EFR and depending on the request of the edge scheduled receiver several subcases occur. Specifically, the rates that can be achieved are $\R^{\rm{IIC}_c}_\text{sum}(\omega_3)$, $\R^{\rm{IIC}_c}_\text{sum}(\omega_2)$, $\R_\text{sum}(\omega_1,\omega_3)$, $\R_\text{sum}(\omega_1,\omega_2)$ and $\R_\text{sum}(\omega_1,\omega_1)$; each occurring in a similar manner with the previous mode.   
\subsubsection{All CC}In this case where both classes of receivers employ CC, since all receivers have partitions of files in their cache, IIC is not feasible by any receiver. However, all the three possible loads of transmitter might occur by any set. Specifically the sum rate can achieve any of the following depending on the file requests; $\R_\text{sum}(\omega_3,\omega_3)$, $\R_\text{sum}(\omega_3,\omega_2)$, $\R_\text{sum}(\omega_3,\omega_1)$, $\R_\text{sum}(\omega_2,\omega_3)$, $\R_\text{sum}(\omega_2,\omega_2)$, $\R_\text{sum}(\omega_2,\omega_1)$, $\R_\text{sum}(\omega_1,\omega_1)$. 

\subsection{Asymptotic Analysis}
\subsubsection{IIC is not feasible}For the case where no IIC is employed and $P \to \infty$, $\beta \in (0,1)$, $\rho \in (0,1)$ then $\lim_{P \to \infty}\pi_\eta(t)\to1$ while $\eta$ reaches its upper bound. As such, the rates at the receivers become deterministic and are given by 
\begin{align}
\R_c(\omega_i) &= \mathbf{1}\{\vartheta_1\! > \zeta\} (u R(\omega_i,\vartheta_1)\!\nonumber\\
&+ \mathbf{1}\{\vartheta_{c,2}\! > \Xi(\omega_i)\} R(\omega_i,\vartheta_{c,2}))\nonumber\\
&+\mathbf{1}\{\vartheta_1 <\! \zeta\} \mathbf{1}\{\vartheta_{c,3} > \Xi(\omega_i)\} R(\omega_i,\vartheta_{c,3}),
\end{align}

at the center receiver and by 
\begin{align}
\R_e(\omega_i) &= \mathbf{1}\{\vartheta_1 > \zeta\} ((1-u)R(\omega_i\vartheta_1)\nonumber\\
&+\mathbf{1}\{\vartheta_{e,2} > \Xi(\omega_i)\} R(\omega_i,\vartheta_{e,2}))\nonumber\\
&+\mathbf{1}\{\vartheta_1 < \zeta\} \mathbf{1}\{\vartheta_{e,3} > \Xi(\omega_i)\} R(\omega_i,\vartheta_{e,3}),
\end{align} 
at the edge receiver. Subsequently, the asymptotic sum rate is given by $\R_{sum}=\R_c(\omega_i)+\R_e(\omega_j)$, $i,j \in \{1,2,3\}$.

\subsubsection{IIC is feasible}When IIC is employed at the $n$th receiver, then the rate at the receiver might be bounded or not depending on the power allocation and the SINR thresholds. Specifically, $\eta_{n,{\rm{IIC}}}^p$ is not bounded and corresponds to the SINR for decoding the private stream once the common stream has been decoded and extracted. With $P\to\infty$ the coverage probability goes to $1$, as such, even though the common stream's rate is upper bounded ($\eta_{n,{\rm{IIC}}}^0$), since $\eta_{n,{\rm{IIC}}}^p$ is not, the rate increases with $P$. Therefore, the rates with IIC are only bounded when the common stream is not decoded and in this case the achieved rate for the $n$-th receiver is given by 
\begin{equation}
\R_n^{\rm{IIC}_n}(\omega_i)=\mathbf{1}\{\vartheta_{n,4} < \zeta\} \mathbf{1}\{\vartheta_{n,4}^{-1} > \Xi(\omega_i)\} R(\omega_i,\vartheta_{n,4}^{-1}).
\end{equation}
\section{Numerical Results}\label{numerical}
\begin{figure*}[t]	
	\begin{minipage}{0.32\textwidth}
		\includegraphics[width=\linewidth]{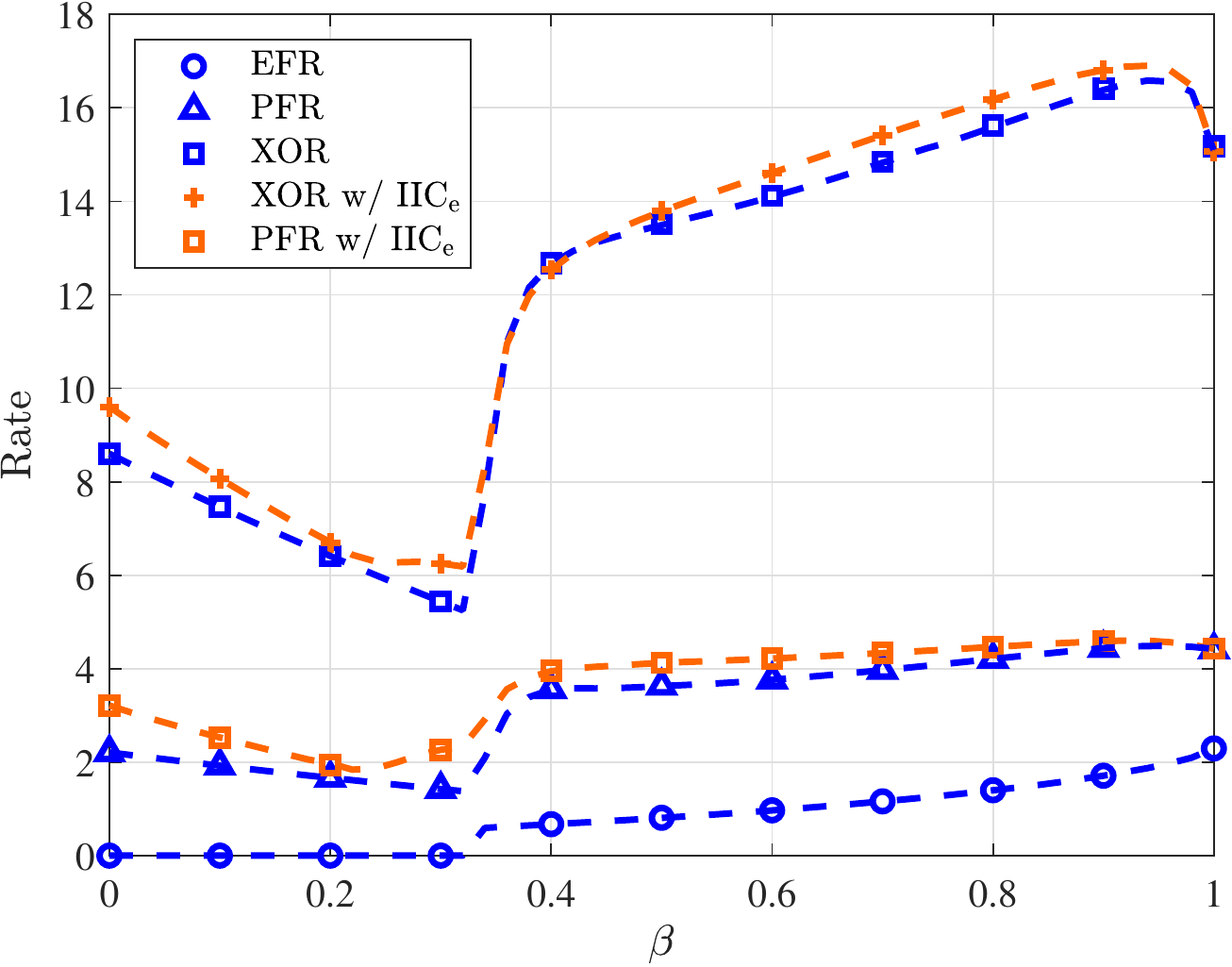}
		
		\caption{The sum rate for all the subcases obtained with the CC/MPC mode.}\label{fig4}
	\end{minipage}\hfill
	\begin{minipage}{0.32\textwidth}
		
		\includegraphics[width=\linewidth]{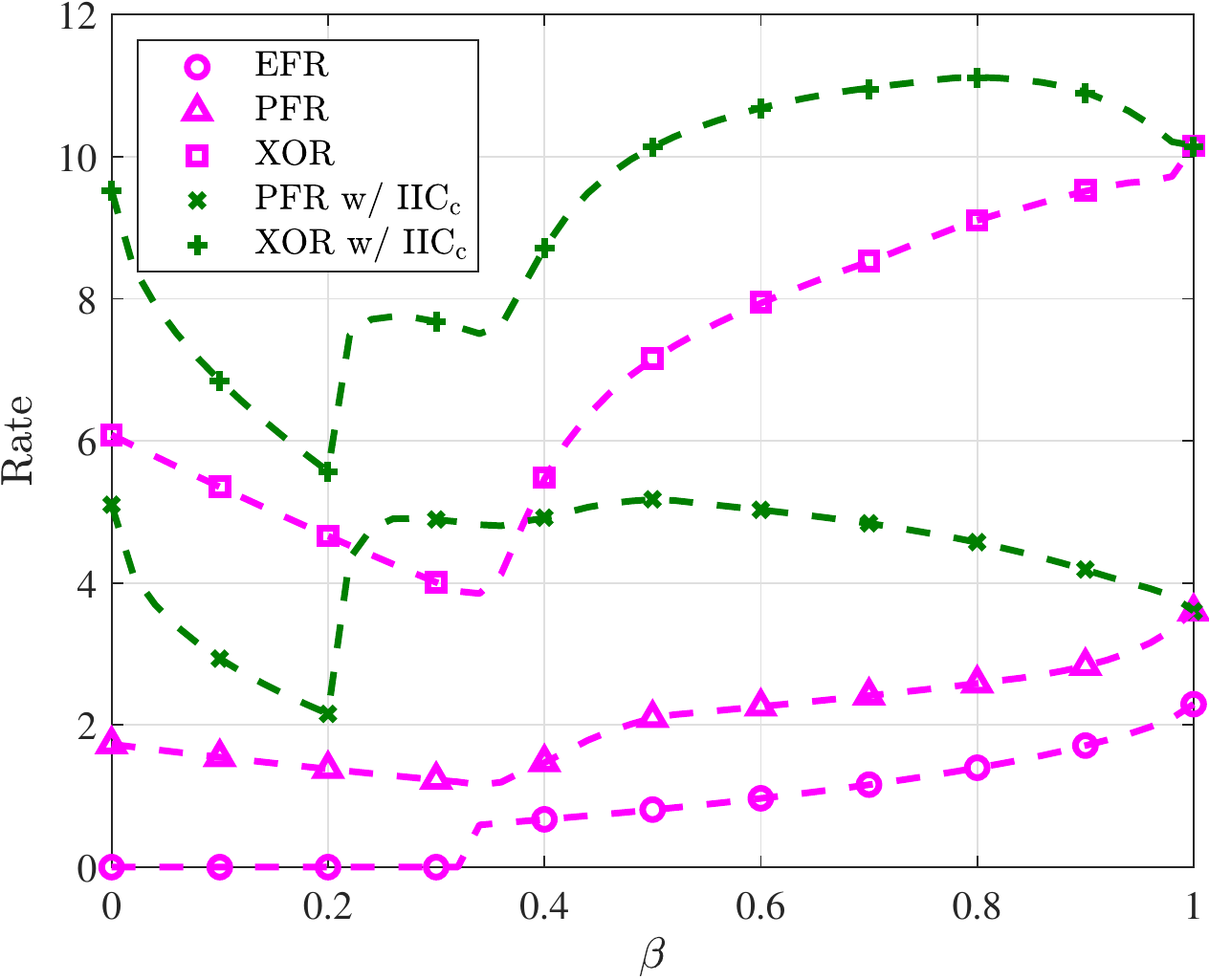}
		
		\caption{The sum rate for all the subcases obtained with the MPC/CC mode.}\label{fig5}
	\end{minipage}\hfill
	\begin{minipage}{0.32\textwidth}
		\includegraphics[width=\linewidth]{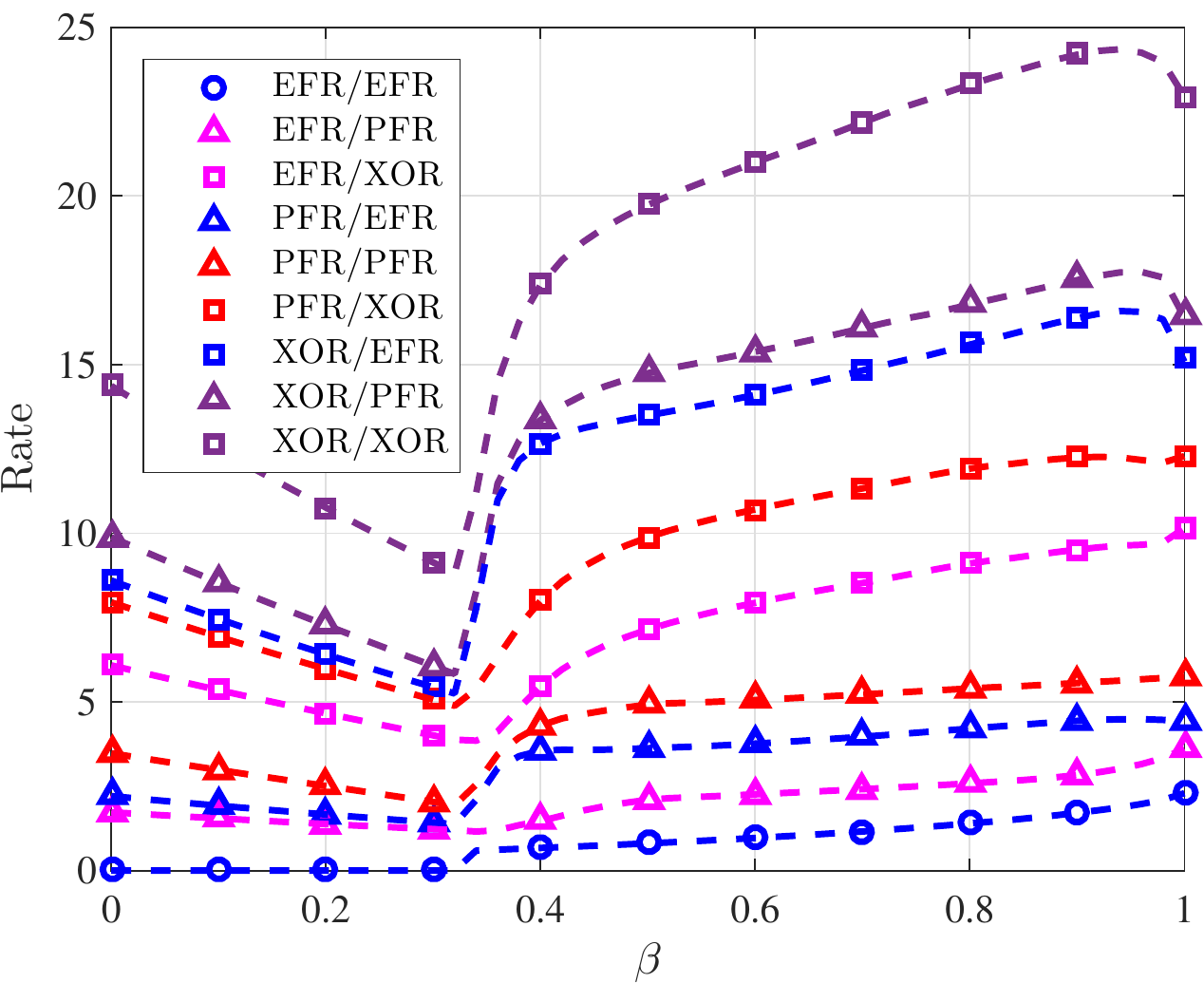}
		
		\caption{The sum rate rate for all the subcases with CC employed at both sets.}\label{fig6}
	\end{minipage}
	
\end{figure*}
In this section, we provide numerical results to evaluate the performance of the typical receivers employing the proposed CRS technique. {Monte-Carlo iterations ($10^5$) were carried out in order to average under different channel realizations (network topology and Rayleigh fading).} Throughout this section markers and dashed lines represent simulation and analytical results, respectively, and unless otherwise stated, we consider the following: $P=10$ W, $\sigma^2=10^{-5}$, $r_c=50$ m, $r_e=60$ m, $r_0=70$ m, $\rho=0.5$, $\zeta=0.5$, $\xi=1$, $u=0.5$, $M=30$, $N=50$ and $K=5$.

In Fig. \ref{fig2} and Fig. \ref{fig3} we plot the rate achieved at the center and edge receiver, respectively, for all the subcases with respect to the fraction of power $\beta$ allocated to the common stream i.e, $p_0=\beta P$. In both figures, it can be observed that, the rate up to a certain value of $\beta$, either decreases or is zero (EFR case). This is due to the upper bounds of the SINR for decoding the common stream, as explained in Appendix \ref{app1}\footnote{In Remarks \ref{remark1} and \ref{remark2}, provided in Appendix \ref{app1}, we indicate the minimum power that should be allocated to a stream such that under minimum rate constraints, the stream is decoded with a non-zero probability.}. At these values, the rate of each receiver is obtained from decoding the private stream with interference from $s_0$, which decreases with $\beta$ since less power is allocated to the private streams $s_c$ and $s_e$. In Fig. \ref{fig2}, when comparing the cases where an EFR is placed (with and without IIC) at the center receiver, we can see that the value of $\beta$ when $s_0$ can be decoded, is lower for the case where IIC is employed, while a non-zero rate is achieved for all the values of $\beta$. On the other hand, when IIC is not employed\footnote{{Note that, the EFR case without IIC can capture the performance achieved when no caching is employed, as it corresponds to requesting a non-cached file without the ability to exploit the local cache.}}, due to the upper bounds of $\eta_{c}^{pI}(\Xi(\omega_1))$, the rate with an EFR is zero until $s_0$ can be decoded. When the stream $s_0$ is decoded, then the private stream $s_c$ can be decoded without interference from $s_0$, if the fraction of power allocated to $s_c$, $\rho$ is higher than a certain value given in Remark \ref{remark1}. Due to that constraint, the rate at the receiver is obtained solely due to decoding $s_0$, and it increases with $\beta$, as expected. Similar observations hold in Fig. \ref{fig3} for the edge receiver placing an EFR (see Remark \ref{remark2}). Furthermore, in both figures, when comparing the cases {EFR, PFR, XOR}, we can see the benefits brought by CC. When XOR transmissions are employed, all the $K$ requests are served with a higher pre-log factor $\omega$ and therefore a higher rate is achieved. Also, we can see that CC is beneficial even when XOR transmissions are not feasible. That is, in the case where a PFR is placed, a better performance is achieved compared to the EFR case, since $\omega_2>\omega_1$. However, in Fig. \ref{fig3}, once the stream $s_0$ is decoded, the rates in the cases of PFR and XOR transmission, increase with $\beta$, as expected, and then they decrease again. This is due to the fact that, less power is allocated to the private stream which can be decoded with higher rate than the common stream. On the other hand, at high values of $\beta$, the rate achieved by decoding the common stream is higher than the one achieved by decoding the private stream, as such the rates at the edge receiver increase again. 

Moreover, we can see that the impact on the $n$-th receiver rate with IIC employed at the $k$-th receiver, is different for the center and edge receivers. In Fig. \ref{fig2} we can see that, for the values of $\beta$ where the receiver is not able to decode $s_0$, the performance of the cases with and without IIC (at the edge receiver), are similar since the SINR constraints (upper bounds) are the same. Once the common stream is decoded, the rates with IIC are lower, while after a certain value of $\beta$, they outperform the rates without IIC. This is due to the fact that, the rate obtained from decoding the common stream depends on $\pi_{\eta^0_{e}}(\zeta)$ and $\pi_{\eta^0_{e,\rm{IIC}}}(\zeta)$, accordingly while $\pi_{\eta^0_{e,\rm{IIC}}}(\zeta)>\pi_{\eta^0_{e}}(\zeta)$. At lower values of $\beta$, where less power is allocated to the common stream, the gains brought by IIC are higher. This however, is a drawback on the center receiver's rate when $\R_c^{0}>u\R_b^{0,\rm{IIC}_e}$ (see eq. \eqref{com1IIC}). On the other hand, at higher values of $\beta$, while $\pi_{\eta^0_{e,\rm{IIC}}}(\zeta)$ has lower gains in comparison to $\pi_{\eta^0_e}(\zeta)$, $\R_b^{0,\rm{IIC}_e}(\zeta)$ outperforms $\R_b^{0}(\zeta)$. At these values, the rates with IIC perform better. Different from the center receivers rate, we can see in Fig. \ref{fig3} that when IIC is employed at the center receiver, due to the difference in the channel conditions i.e., the edge receiver has worst channel conditions with high probability, the impact of IIC employed at the center receiver, is negligible for the edge receiver's rate.

\begin{figure*}[t]	
	\begin{minipage}{0.48\textwidth}
		\includegraphics[width=0.9\linewidth]{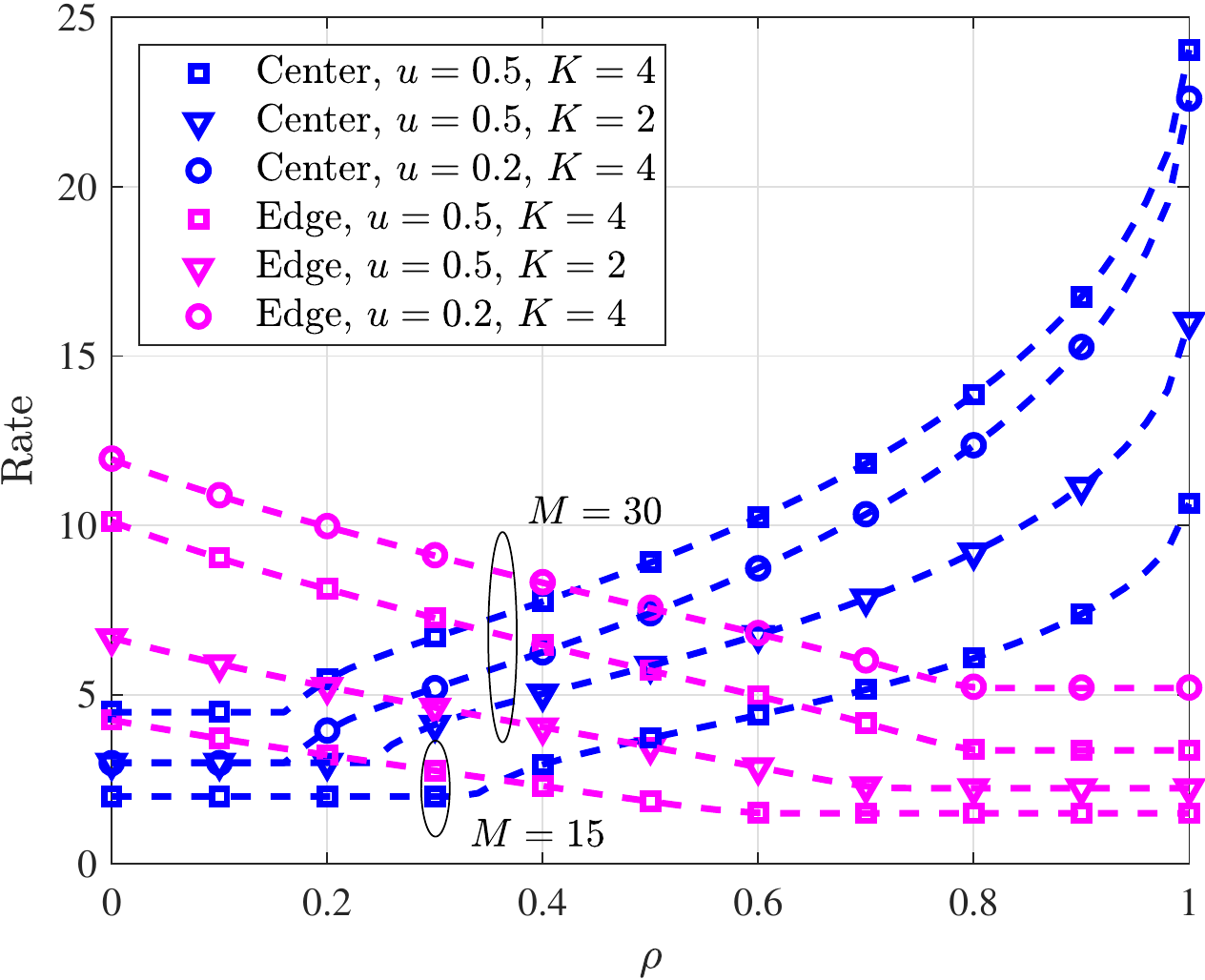}
		\caption{The rate at each receiver with respect to $\rho$, when XOR transmissions are employed without IIC; $N=60$, $\beta=0.7$, $\zeta=0.5$, $\xi=2$.}\label{fig7}
	\end{minipage}\hfill
	\begin{minipage}{0.48\textwidth}
		\includegraphics[width=0.9\linewidth]{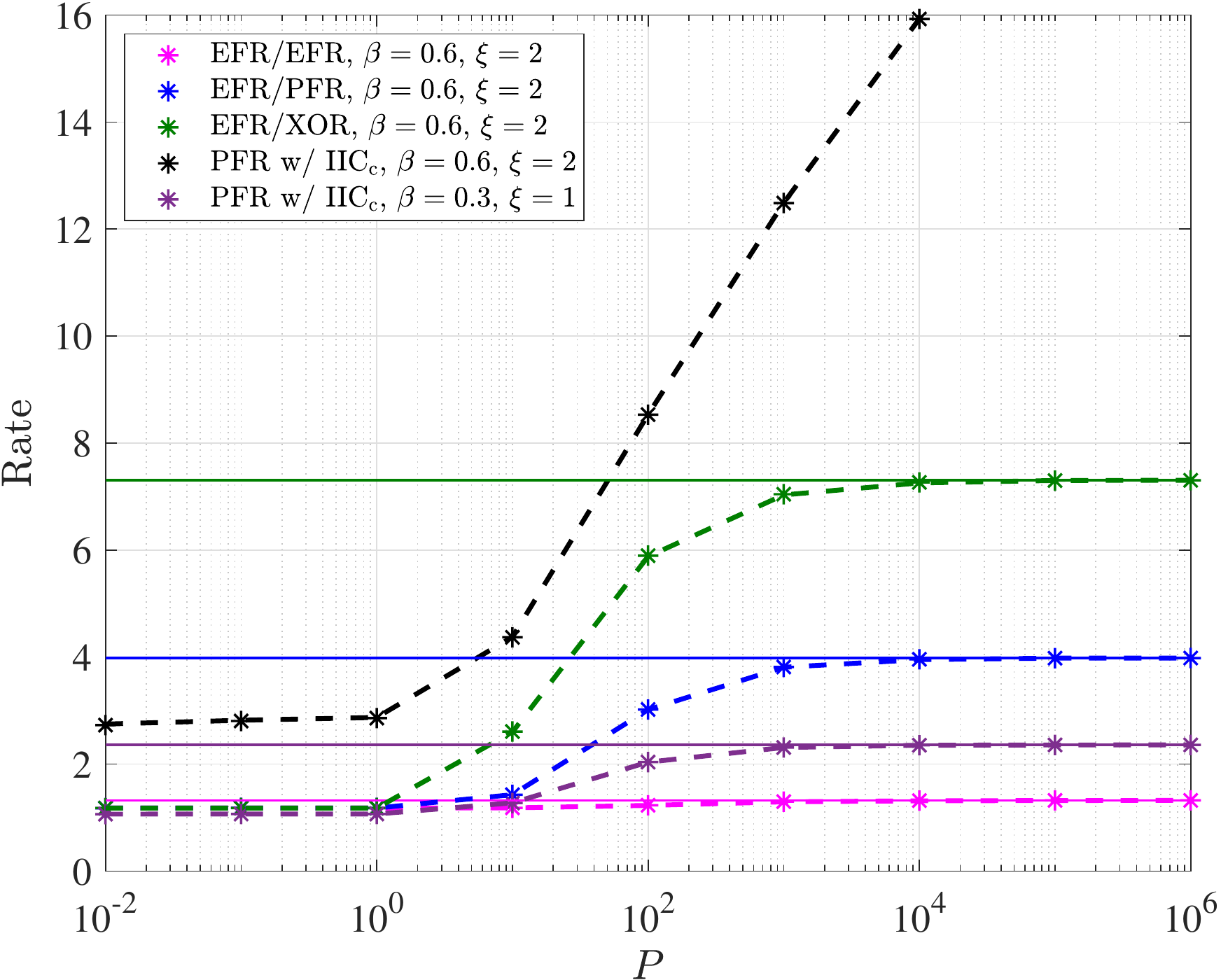}
		\caption{The sum rate with respect to the transmit power $P$; solid lines represent the asymptotic cases; $N=60$, $K=2$, $\beta=0.6$, $\rho=0.5$, $\zeta=1$, $\xi=2$.}\label{fig8}
	\end{minipage}\hfill
\end{figure*}
In Fig. \ref{fig4}, we present the sum rate with respect to $\beta$ for all the subcases obtained when operating in CC/MPC mode. Since MPC is employed at the edge receivers, the scheduled edge receiver always place an EFR i.e., the pre-log factor is $\omega_1$. On the other hand, the pre-log factor for the center receiver varies and depends on all the $K$ requests placed by the center receivers class, as explained in Section \ref{system}. Note that, the rate achieved when EFR is placed by both receivers, is also the only one that can be achieved when MPC is applied at both classes of receivers (all MPC mode). {Furthermore, this case captures the performance achieved when no caching is employed at the receivers, with the transmitter always serving requests for non-cached files.} As expected, this case is outperformed by the rates that can be achieved when the request is cached according to the CC policy; even when the XOR transmissions are not employed i.e., with PFR. Moreover, when IIC is employed at the edge receiver, the gains brought by CC are further boosted, as expected. 

Fig. \ref{fig5}, depicts the sum rate versus $\beta$ for all the subcases that can be achieved when MPC is employed at the center receivers and CC at the edge receivers i.e, MPC/CC mode. In this case, the center receiver always places an EFR, while the load of the transmitter for the edge receiver varies. We can see, as above, the gains achieved by PFR and XOR transmissions which outperform the case where an EFR is placed by both receivers. For the cases of PFR and XOR transmissions, we can see remarkable gains when IIC is employed at the center receiver. The pre-log factor $\omega$ boosts the performance of the edge receiver and combined with IIC at the center receiver, which has better channel conditions, the performance increases significantly. In addition, when IIC is employed, the value of $\beta$ at which the common stream can be decoded, is lower for the center receiver (see Remark \ref{remark1}). As a result, with a PFR by the edge receiver and IIC at the center receiver, there is a range of values for which the sum rate performs better than the case of XOR transmissions at the edge receivers without IIC at the center receiver. 

In Fig. \ref{fig6}, we plot the sum rate with respect to $\beta$ for all the subcases obtained when CC is employed at each class of receivers. In comparison with the other modes, more subcases can be obtained including cases where a pre-log factor higher than $\omega_1$ is obtained at both receivers; for example the case PFR/XOR. Furthermore, we can see that the lowest performance occurs in the case of EFR/EFR, while the highest rate is achieved when XOR transmissions are employed at both sets. Even though, with this mode some of the subcases can be achieved by the previous modes, the IIC is not feasible for further performance boosting since all the receivers partially cache files.

Fig. \ref{fig7} shows the rates achieved at each receiver when XOR transmissions are employed at the center and edge receivers i.e., $\R_c(\omega_3)$ and $\R_e(\omega_3)$ respectively, with respect to $\rho$, for $u\in\{0.2,0.5\}$ and for $K\in\{2,4\}$. It can be seen that, $\R_c(\omega_3)$ increases with $\rho$ while $\R_e(\omega_3)$ decreases. This is expected since $p_c=(1-\beta)\rho$, while $p_e=(1-\beta)(1-\rho)$. For values $\rho\leq\frac{\Xi(\omega_3)}{1+\Xi(\omega_3)}$, the probability $\pi_{\eta_c^p}(\Xi(\omega_3))$ is zero (see Remark \ref{remark1}), while $\pi_{\eta^0_c}(\zeta)>\pi_{\eta_c^{pI}}(\Xi(\omega_3))$. As a result, at those values of $\rho$ the receiver has only rate from decoding the stream $s_0$. When $\rho$ becomes higher that the constraint, the private stream $s_c$ is decoded as well, and the rate increases with $\rho$, as expected. In a similar way, the rate at the edge receiver decreases with $\rho$ since less power is allocated to $s_e$. For values $\rho>\frac{1}{1+\Xi(\omega_3)}$, the probability $\pi_{\eta_e^p}(\Xi(\omega_3))$ becomes zero (see Remark \ref{remark2}) and $\pi_{\eta^0_e}(\zeta)>\pi_{\eta_e^{pI}}(\Xi(\omega_3))$. At these values, the edge receiver's rate occurs solely due to decoding the common stream. Furthermore, we can see that a lower value of $u$ boosts the performance of the edge receiver, while a higher value of $u$ is more beneficial for the center receiver \footnote{{For a specific mode of operation and type of requests, the values of $u$, $\beta$ and $\rho$ can be numerically optimized under a certain objective, by utilizing the provided analytical expressions.}}. This is expected since $u$ is the fraction of the common rate allocated to the center receiver while the fraction $(1-u)$ is allocated to the edge receiver. {Moreover, it is clear that with a lower cache size, a worse performance is achieved at both center and edge receivers. Recall that, the impact of cache size is captured by the pre-log factor i.e., $\omega_3$, which decreases with a lower $M$ and subsequently lower rates can be achieved.} Finally, the performance of both receivers increases with higher $K$, which is expected since with higher $K$, a higher $\omega_3$ is obtained.

In Fig. \ref{fig8}, we plot the sum rate with respect to the transmit power for different loads as well as with IIC employed at the center receiver. When IIC is not employed, the performance of each receiver is bounded due to the SINR upper bound. As such, the rate increases with power until reaching the asymptotic bound, whereas the higher $\omega_i$, the higher the rate, as expected. When IIC is employed, we present two cases (i) $\beta=0.6,\xi=2$ and (ii) $\beta=0.3,\xi=1$. In (i), while the edge receiver's rate is upper bounded, the center receiver's rate is not. Therefore, the rate increases with power and subsequently results in an ever increasing sum rate. This is due to the fact that $\zeta<\vartheta_{c,4}$, while the private stream's SINR is not bounded. On the other hand, in (ii) $\zeta>\vartheta_{c,4}$ and $\Xi(\omega_1)<\vartheta_{c,4}^{-1}$, thus both the center and edge receivers rates are bounded resulting in a bounded sum rate.

\section{Conclusions}\label{conclusion}
In this work, we proposed a CRS technique that employs RS in order to serve cache-enabled receivers by utilizing the benefits brought by the CC and MPC caching placements. The CC gains were captured in terms of pre-log factor while the MPC policy was exploited for interference cancellation. The proposed technique operates in four caching-based modes which can implement several communication techniques according to the receivers requests. By considering spatial randomness we followed a probabilistic approach to provide a complete analytical framework in terms of achieved rate, under minimum rate constraints. We presented numerical results that validated our analysis and we extensively discussed the importance of the power allocation factors and their impact on the achieved rates under each mode of operation. The proposed CRS technique brings further flexibility to the RS through the caching gains and their impact on the achieved rates. Under certain rate requirements caching boosts the receivers' rates and allows more flexibility to the power allocation factors as well as the common rate allocation. {Future extensions of this work include multiple antennas configurations in order to unlock the potentials of the RS scheme and subsequently enhance the performance of the CRS technique.} 

\appendix
\begin{table*}[t]\centering 
	\caption{Parameters for the $n$-th receiver's SINR distributions}
	
	\begin{tabular}{|c||c|c|c|c|c|c|}
		\hline
		$\eta$ & $\eta_n^0$ & $\eta_n^p$ & $\eta_n^{pI}$ & $\eta_{n,IIC}^0$ & $\eta_{n,IIC}^p$ & $\eta_{n,IIC}^{pI}$ \\ \hline
		$\theta$ & $\vartheta_1$ & $\vartheta_{n,2}$ & $\vartheta_{n,3}$ & $\vartheta_{n,4}$ & $\infty$ & $\vartheta_{n,4}^{-1}$ \\ \hline
		$s_\eta$ & $\frac{\sigma^2 t}{p_0-(p_c+p_e)t}$ & $\frac{\sigma^2 t}{p_n-p_k t}$ & $\frac{\sigma^2 t}{p_n-(p_0+p_k)t}$ & $\frac{\sigma^2 t}{p_0-p_n t}$ & $\frac{\sigma^2 t}{p_n}$ & $\frac{\sigma^2 t}{p_n - p_0 t}$ \\ \hline
	\end{tabular}\label{prmtrs}
	
\end{table*}
\subsection{SINR distributions}\label{app1}
In what follows, we provide the coverage probability $\pi_\eta$ and the PDF $g_\eta$ of the SINR $\eta$ for the center and edge receivers. 

\subsubsection{Center receiver}
	The coverage probability for the center receiver with SINR $\eta$, is given by 
	\begin{equation}
	\pi_\eta(t)=\mathbf{1}\left\{t<\theta\right\}\frac{2\exp\left(-s_\eta\right)}{\alpha r_c^2 s_\eta^{2/\alpha}}\gamma \left(\frac{2}{\alpha},s_\eta r_c^\alpha\right),
	\end{equation}
	and the PDF of $\eta$ is given by 
	\begin{align}
	g_\eta(t)=\frac{2\exp\left(-s_\eta\right)}{\alpha t \left(\theta t -1\right)}\left(\frac{1}{\exp\left(s_\eta r_c^\alpha\right)}-\frac{s_\eta+2/\alpha}{s_\eta^{-2/\alpha}r_c^2}\gamma\left(\frac{2}{\alpha}, \frac{s_\eta }{r_c^{-\alpha}}\right)\right),
	\end{align}
	where $\eta$, $\theta$ and $s_\eta$ are given in Table \ref{prmtrs} by setting $n=c$ and $k=e$.

\begin{proof}
Consider the SINR given by \eqref{gammaC} for the center receiver i.e., $n=c$. The coverage probability is evaluated as follows
\begin{align}
\mathbb{P}\left[\eta_c^0>t\right]&=\mathbb{P}\left[\frac{(p_c+p_e) h_c (1+d_c^\alpha)^{-1}+\sigma^2}{p_0 h_c (1+d_c^\alpha)^{-1}}\leq \frac{1}{t}\right]\nonumber\\
&=\mathbb{P}\left[\frac{\sigma^2\left(1+d_c^\alpha\right)}{p_0 h_c}\leq \frac{1}{t}-\frac{p_c+p_e}{p_0}\right].\label{bound}
\end{align}
Based on the upper bound $\vartheta_1$ given in \eqref{set}, we can see that for $t\geq\frac{p_0}{p_c+p_e}$, the coverage probability is zero. Hence, for $t<\vartheta_1$,
\begin{align}
\mathbb{P}\left[\eta_c^0>t\right]&\stackrel{(a)}{=}
\mathbb{E}\left[\exp\left(-\frac{\sigma^2\left(1+x^\alpha\right)}{p_0\left(\frac{1}{t}-\frac{1}{\vartheta_1}\right)}\right)\right]\label{prop1eq1}\nonumber\\
&=\int_{\mathcal{D}(r_c)}\exp\left(-\frac{\sigma^2\left(1+x^\alpha\right)}{p_0\left(\frac{1}{t}-\frac{1}{\vartheta_1}\right)}\right)f_{d_c}(x)\,dx\\
&=\frac{1}{\pi r_c^2}\int_{0}^{2 \pi}\!\!\!\int_{0}^{r_c}\!\!\!x\exp\left(-\frac{\sigma^2\left(1+x^\alpha\right)}{p_0\left(\frac{1}{t}-\frac{1}{\vartheta_1}\right)}\right)\,dx\,d\phi,\label{prop1eq3}
\end{align}
where $(a)$ follows from $h_c \sim \exp(1) $ and $f_{d_c}(x) = 1/\pi r_c^2$ is the PDF of $d_c$. We follow the same procedure for the SINRs $\eta_c^p$ and $\eta_c^{pI}$ and the final general expression for the coverage probability is obtained from \cite[3.326.4]{GRAD}. 

Since the coverage probability $\pi_\eta(t)$ is the complementary cumulative distribution function (CCDF) of SINR, we can use it to derive the PDF of $\eta$ i.e., $g_\eta(t)= d \left(1-\pi_\eta(t)\right)/dt$. Hence, by using \eqref{prop1eq3}, the PDF of $\eta_c^0$ is evaluated as follows
\begin{align}
g_{\eta_c^0}(t)&=\!\frac{-2}{r_c^2}\!\int_{0}^{r_c}x\frac{d}{dt}\left(\exp\!\left(-\frac{\sigma^2\left(1+x^\alpha\right)}{p_0\left(\frac{1}{t}-\frac{1}{\vartheta_1}\right)}\right)\right)\,dx\nonumber\\
&=\!\frac{-2s_{\eta_c^0}}{r_c^2}\!\!\int_{0}^{r_c}\!\frac{\exp\left(-s_{\eta_c^0}\left(1+x^\alpha\right)\right)\left(1+x^\alpha\right)x}{t\left(\vartheta_1 t -1\right)}\,dx,
\end{align}
where the integral is solved with the help of \cite[3.326.4]{GRAD}.
\end{proof}	 

\begin{remark}\label{remark1}
	It is clear that, when $t>\theta$, the center receiver is in outage. Therefore, for a certain $\theta$, this inequality provides the outage conditions with respect to the power allocation factors. Specifically, $\pi_{\eta^0_{c}}(t)=0$ if $\beta\leq \frac{t}{1+t}$, $\pi_{\eta_{c}^p}(t)=0$ if $\rho\leq\frac{t}{1+t}$, $\pi_{\eta_{c}^{pI}}(t)=0$ if  $\{\beta\leq \frac{1}{1+t}, \rho\leq-\frac{t}{\beta t+\beta-t-1}\}$ or $\{\beta>\frac{1}{1+t}, \forall\rho\}$, $\pi_{\eta^0_{c,\rm{IIC}}}(t)=0$ if $\beta\leq \frac{\rho t}{1+ \rho t}$, and $\pi_{\eta_{c,\rm{IIC}}^{pI}}(t)=0$ if $\{\rho=0, \beta>0\}$ or $\{\rho>0, \beta\geq \frac{\rho}{\rho+t}\}$.
\end{remark}

\subsubsection{Edge receiver}
	The coverage probability for the edge receiver with SINR $\eta$, is given by 
	\begin{align}
	\pi_\eta(t)=\mathbf{1}&\left\{t<\theta\right\}\frac{2}{\alpha\left(r_0^2-r_e^2\right) s_{\eta}^{2/\alpha}\exp\left(s_{\eta}\right)}\left(\gamma\left(\frac{2}{\alpha},\frac{s_{\eta}}{r_0^{-\alpha}} \right)-\gamma\left(\frac{2}{\alpha},\frac{s_{\eta}}{r_e^{-\alpha} }\right)\right),
	\end{align}
	and the PDF of $\eta$ is given by
	\begin{align}
	g_\eta(t)=\frac{2\exp \left(-s_\eta\right)}{\alpha t \left(r_e^2-r_0^2\right)\left(\theta t-1\right)}&\Bigg(\frac{r_e^2}{\exp\left(s_\eta r_e^{\alpha}\right)}-\frac{r_0^2}{\exp\left(s_\eta r_0^{\alpha}\right)}\nonumber\\
	&+s_\eta^{-2/\alpha}\left(s_\eta+\frac{2}{\alpha}\right)\left(\gamma\left(\frac{2}{\alpha},\frac{s_\eta }{r_0^{-\alpha}}\right)-\gamma\left(\frac{2}{\alpha},\frac{s_\eta }{r_e^{-\alpha}}\right)\right)\Bigg),
	\end{align}
	where $\eta$, $\theta$ and $s_\eta$ are given in Table \ref{prmtrs} by setting $n=e$ and $k=c$.
\begin{proof}
	The proof is similar to the one provided for the center receiver's distributions with the substitution of $f_{d_c}(x)$ by $f_{d_e}(x)=\frac{1}{\pi\left(r_0^2-r_e^2\right)}$.
\end{proof}

\begin{remark}\label{remark2}
	Similar to Remark \ref{remark1}, we can solve the inequality $t>\theta$, in order to express the effect of the power allocation factors on the coverage probability of the edge receiver. That is, $\pi_{\eta^0_{e}}(t)=0$ if $\beta\leq \frac{t}{1+t}$, $\pi_{\eta_{e}^p}(t)=0$ if $\rho\geq\frac{1}{1+t}$, $\pi_{\eta_{e}^{pI}}(t)=0$ if $\{\beta\leq \frac{1}{1+t}, \rho>\frac{\beta t+\beta-1}{\beta t+\beta-t-1}\}$ or $\{\beta>\frac{1}{1+t}, \forall\rho\}$, $\pi_{\eta^0_{e,\rm{IIC}}}(t)=0$ if $\beta\leq \frac{\rho t-t}{\rho t-t-1}$, and $\pi_{\eta_{e,\rm{IIC}}^{pI}}(t)=0$ if $\{\rho=1, \beta>0\}$ or $\{\rho<1, \beta\geq \frac{\rho-1}{\rho-t-1}\}$.
\end{remark}

\subsection{Proof of Proposition \ref{prop3}}\label{proof3}
The rate achieved for decoding $s_c$, when $s_0$ is successfully decoded, is given by
\begin{equation}
\R_c^p=\frac{\mathbb{E}\left[\mathbf{1}\{\eta_c^0>\zeta, \eta_c^p>\Xi(\omega_i)\} R(\omega_i,\eta_c^p) \right]}{\min\{\pi_{\eta_c^0}(\zeta),\pi_{\eta_c^p}(\Xi(\omega_i))\}},
\end{equation}
since $\R_c^p$ is non-zero only if both the common and the private streams are successfully decoded.
We can deduce from $\eta_c^0>\zeta$ and $\eta_c^p>\Xi(\omega_i)$, that $h_c>\frac{\sigma^2(1+d_c^\alpha)}{p_0 \left(\frac{1}{\zeta}-\frac{p_c+p_e}{p_0}\right)},\text{\hspace{+2mm}}h_c>\frac{\sigma^2(1+d_c^\alpha)}{p_c \left(\frac{1}{\Xi(\omega_i)}-\frac{p_e}{p_c}\right)}$, respectively. When $\frac{\sigma^2(1+d_c^\alpha)}{p_0 \left(\frac{1}{\zeta}-\frac{p_c+p_e}{p_0}\right)} < \frac{\sigma^2(1+d_c^\alpha)}{p_c \left(\frac{1}{\Xi(\omega_i)}-\frac{p_e}{p_c}\right)}$, it implies that $\pi_{\eta_c^0}(\zeta)>\pi_{\eta_c^p}(\Xi(\omega_i))$. Therefore, the rate achieved from decoding $s_c$ goes to its upper limit which is $\vartheta_{c,2}$, and the rate is evaluated as follows 
\begin{align}
\R_c^p &= \frac{\mathbb{E}\left[\mathbf{1}\{\eta_c^p>\Xi(\omega_i)\} \omega_i\log_2(1+\eta_c^p) \right]}{\pi_{\eta_c^p}(\Xi(\omega_i))}\nonumber\\
&=\frac{\omega_i}{\pi_{\eta_c^p}(\Xi(\omega_i))}\int_{\Xi(\omega_i)}^{\vartheta_{c,2}^{-}}\log_2\left(1+t\right)g_{\eta_c^p}(t)\,dt.
\end{align}  
On the other hand, when $\pi_{\eta_c^0}(\zeta)<\pi_{\eta_c^p}(\Xi(\omega_i))$, i.e., $\frac{\sigma^2(1+d_c^\alpha)}{p_0 \left(\frac{1}{\zeta}-\frac{p_c+p_e}{p_0}\right)} > \frac{\sigma^2(1+d_c^\alpha)}{p_c \left(\frac{1}{\Xi(\omega_i)}-\frac{p_e}{p_c}\right)}$, we need to subtract the rate that is not achieved due to the outage of $\pi_{\eta_c^0}(\zeta)$ i.e., $\frac{\sigma^2(1+d_c^\alpha)}{p_c \left(\frac{1}{\Xi(\omega_i)}-\frac{p_e}{p_c}\right)}<h_c<\frac{\sigma^2(1+d_c^\alpha)}{p_0 \left(\frac{1}{\zeta}-\frac{p_c+p_e}{p_0}\right)}$. Therefore, in this case, $\R_c^p$ can be evaluated by
\begin{align}
\R_c^p&=\frac{\mathbb{E}\left[\mathbf{1}\{\eta_c^p>\Xi(\omega_i)\} \omega_i\log_2(1+\eta_c^p)  \right]-\mathbb{E}\left[\mathbf{1}\{\eta_c^0<\zeta, \eta_c^p>\Xi(\omega_i)\} \omega_i\log_2(1+\eta_c^p)  \right]}{\pi_{\eta_c^0}(\zeta)}\nonumber\\
&=\frac{\omega_i}{\pi_{\eta_c^0}(\zeta)}\Big(\int_{\Xi(\omega_i)}^{\vartheta_{c,2}^{-}}\log_2(1+t)g_{\eta_c^p}(t)\,dt-\int_{\Xi(\omega_i)}^{\theta_0}\log_2(1+t)g_{\eta_c^p}(t)\,dt\Big)\nonumber\\
&=\frac{\omega_i}{\pi_{\eta_c^0}(\zeta)}\int_{\theta_0}^{\vartheta_{c,2}^{-}}\log_2(1+t)g_{\eta_c^p}(t)\,dt.
\end{align}
For the evaluation of the upper limit of the rate
$\mathbb{E}\left[\mathbf{1}\{\eta_c^0<\zeta, \eta_c^p>\Xi(\omega_i) \}\omega_i\log_2(1+\eta_c^p)\right]$ i.e., $\theta_0$, we solve the inequality $p_c \left(\frac{1}{\Xi(\omega_i)}-\frac{p_e}{p_c} \right)>	p_0 \left(\frac{1}{\zeta}-\frac{p_c+p_e}{p_0}\right)$,
with respect to $\Xi(\omega_i)$, and is given by $\Xi(\omega_i) < \left(\frac{p_0}{p_c}\left(\zeta^{-1}-\vartheta_1^{-1}\right)+\vartheta_{c,2}^{-1}\right)^{-1}$. For the derivation of $\R_e^p$, we substitute $\eta_c^0$ and $\eta_c^p$ with $\eta_e^0$ and $\eta_e^p$, respectively, and follow similar procedure as above.

\subsection{Proof of Proposition \ref{prop4}}\label{proof4}
The rate for decoding $s_c$, when $s_0$ adds interference to the received SINR, is given by
\begin{align}
\R_c^{pI}=\frac{\mathbb{E}\left[\mathbf{1}\{\eta_c^0<\zeta, \eta_c^{pI}>\Xi(\omega_i)\} \omega_i\log_2(1+\eta_c^{pI}) \right]}{\pi_{\eta_c^{pI}}(\Xi(\omega_i))-\pi_{\eta_c^0}(\zeta)},
\end{align}
since $\R_c^{pI}$ is non-zero with probability $\pi_{\eta_c^{pI}}(\Xi(\omega_i))-\pi_{\eta_c^0}(\zeta)$. If $\zeta\geq\vartheta_1$, then $\pi_{\eta_c^0}(\zeta)=0$ and the rate becomes independent of the condition $\eta_c^0<\zeta$. Therefore $\R_c^{pI}$ is evaluated by
\begin{align}
\R_c^{pI} &= \frac{\mathbb{E}\left[\mathbf{1}\{\eta_c^{pI}>\Xi(\omega_i)\} \omega_i\log_2(1+\eta_c^{pI}) \right]}{\pi_{\eta_c^{pI}}(\Xi(\omega_i))}\nonumber\\
&=\frac{\omega_i}{\pi_{\eta_c^{pI}}(\Xi(\omega_i))}\int_{\Xi(\omega_i)}^{\vartheta_{c,3}^{-}}\log_2(1+t)g_{\eta_2^{pI}}(t)\,dt.
\end{align} 
On the other hand, if $\zeta<\vartheta_1$, then $\R_c^{pI}$ is non-zero if 
$\frac{\sigma^2(1+d_c^\alpha)}{p_c \left(\frac{1}{\Xi(\omega_i)}-\frac{p_0+p_e}{p_c}\right)}<h_c<\frac{\sigma^2(1+d_c^\alpha)}{p_0 \left(\frac{1}{\zeta}-\frac{p_c+p_e}{p_0}\right)}$ i.e., with probability $\pi_{\eta_c^{pI}}(\Xi(\omega_i))-\pi_{\eta_c^0}(\zeta)$. As such, by solving the inequality  
$p_c \left(\frac{1}{\Xi(\omega_i)}-\frac{p_0+p_e}{p_c}\right)>p_0 \left(\frac{1}{\zeta}-\frac{p_c+p_e}{p_0}\right)$
with respect to $\Xi(\omega_i)$, we evaluate the upper limit of $\R_c^{pI}$ which is given by $\Xi(\omega_i)<\left(\frac{p_0}{p_c}\left(\zeta^{-1}-\vartheta_1^{-1}\right)+\vartheta_{c,3}^{-1}\right)^{-1}$.

Finally, when $\pi_{\eta_c^{pI}}(\Xi(\omega_i))<\pi_{\eta_c^0}(\zeta)$ i.e., $\frac{\sigma^2(1+d_c^\alpha)}{p_0 \left(\frac{1}{\zeta}-\frac{p_c+p_e}{p_0}\right)} < \frac{\sigma^2(1+d_c^\alpha)}{p_c \left(\frac{1}{\Xi(\omega_i)}-\frac{p_0+p_e}{p_c}\right)}$, then if $\eta_c^0>\zeta$, $s_0$ is successfully decoded and removed, which implies $s_c$ is decoded with rate $\R_c^p$. If $\eta_c^0$ is in outage then $\eta_c^{pI}$ is also in outage. As a result, $\R_c^{pI}=0$. For the derivation of $\R_e^{pI}$, we substitute $\eta_c^0$ and $\eta_c^{pI}$ with $\eta_e^0$ and $\eta_e^{pI}$, respectively, and follow the above procedure.

\subsection{Proof of Theorem \ref{theorem1}}\label{proofth}
Consider the center receiver, a non-zero rate is achieved if the following streams are decoded; (i) solely $s_{0}$, (ii) $s_{0}$ and $s_c$, (iii) solely $s_c$, where (ii) is the case where partial interference cancellation is successful and hence $s_c$ is decoded without interference from $s_{0}$ while (iii) is the case where $s_c$ is decoded with interference from $s_{0}$. Clearly, all the above cases depend on $\pi_{\eta_c^0}(\zeta)$ and in order to derive the achieved rate we will make use of the rates derived for decoding each stream along with the corresponding conditions as follows. 

When $\zeta<\vartheta_1$, $\Xi(\omega_i)<\vartheta_{c,3}$ and $\pi_{\eta_c^{pI}}(\Xi(\omega_i))<\pi_{\eta_c^0}(\zeta)$, then $\R_c^{pI}=0$; as explained in Appendix \ref{proof4}. As such, the rate is given by either case (i) or (ii) i.e., both conditioned on $\eta_c^0>\zeta$. When $\Xi(\omega_i)<\vartheta_{c,2}$ and $\pi_{\eta_c^0}(\zeta)<\pi_{\eta_c^p}(\Xi(\omega_i))$, then the rate is given by case (ii) and is evaluated by $\R_c=\R_c^{s_{0}}+\R_c^p$ since in that case, $s_c$ is also decoded with probability $1$. On the other hand, when $\pi_{\eta_c^0}(\zeta)>\pi_{\eta_c^p}(\Xi(\omega_i))$ then we have case (i), and with probability $\frac{\pi_{\eta_c^p}(\Xi(\omega_i))}{\pi_{\eta_c^0}(\zeta)}$, the stream $s_c$ is also decoded i.e., case (ii). Therefore the rate is given by $\R_c=\R_c^{s_{0}}+\frac{\pi_{\eta_c^p}(\Xi(\omega_i))}{\pi_{\eta_c^0}(\zeta)}\R_c^p$. Note that, if $\Xi(\omega_i)>\vartheta_{c,2}$, then $\pi_{\eta_c^p}(\Xi(\omega_i))$ and $\R_c^p$ are equal to zero and $\R_c$ has only contribution from $\R_c^{s_{0}}$ i.e., case (i).

When $\zeta<\vartheta_1$, $\Xi(\omega_i)<\vartheta_{c,3}$ and $\pi_{\eta_c^{pI}}(\Xi(\omega_i))>\pi_{\eta_c^0}(\zeta)$, then $\R_c$ is non-zero with probability $\pi_{\eta_c^{pI}}(\Xi(\omega_i))$. Also, due to the fact that $\eta_c^p>\eta_c^{pI}$, then $\pi_{\eta_c^{p}}(\Xi(\omega_i))>\pi_{\eta_c^0}(\zeta)$ also holds. Here, we have the cases (ii) or (iii) i.e., $\R_c=\R_c^{s_{0}}+\R_c^p$, if $s_{0}$ is decoded, otherwise $\R_c=\R_c^{pI}$. Since $\pi_{\eta_c^{pI}}(\Xi(\omega_i))>\pi_{\eta_c^0}(\zeta)$, then $\R_c=\R_c^{pI}$, with probability $\frac{\pi_{\eta_c^{pI}}(\Xi(\omega_i))-\pi_{\eta_c^0}(\zeta)}{\pi_{\eta_c^{pI}}(\Xi(\omega_i))}$ (see Appendix \ref{proof4}); and $\R_c=\R_c^{s_{0}}+\R_c^p$, with probability $1-\frac{\pi_{\eta_c^{pI}}(\Xi(\omega_i))-\pi_{\eta_c^0}(\zeta)}{\pi_{\eta_c^{pI}}(\Xi(\omega_i))}$.

When $\zeta>\vartheta_1$ and $\Xi(\omega_i)<\vartheta_{c,3}$, then $\R_c=\R_c^{pI}$, since $\pi_{\eta_c^0}(\zeta)$ and $\R_c^{s_{0}}$ are equal to zero. Finally, if $\zeta<\vartheta_1$, $\Xi(\omega_i)<\vartheta_{c,3}$, then $\R_c=0$. 

By following similar procedure as above, we derive the rate achieved at the edge receiver. 

\subsection{Proof of Theorem \ref{sum}}\label{proofsum}
We first focus on the case where IIC is not employed. The rate at the $n$-th receiver, $n \in \{c,e\}$, requesting for a size file of $\frac{1}{\omega_i}$, is non-zero with probability $q_n$ which is evaluated by $\pi_{\eta_{n}^0}(\zeta)$ or $\pi_{\eta_{n}^p}(\Xi(\omega_i))$, depending on the streams' thresholds and the power allocation; see Appendix \ref{proofth}. As such, the sum rate is non-zero when either a single or both receivers have non-zero rate i.e., with probability $q_c+q_e-q_cq_e$.

We now consider the case where IIC is employed at the $k$-th receiver. Recall that, IIC is employed when a receiver employs the MPC caching policy, hence the file size of the request is $\frac{1}{\omega_1}$ i.e., EFR. In addition, IIC is employed when the $n$-th receiver requests for file portion i.e., either $\frac{1}{\omega_2}$ or $\frac{1}{\omega_3}$ corresponding to PFR or XOR transmissions. This is due to the fact that IIC is feasible when the requests are of rank $f\leq M$. On the other hand, CC is employed for files $f\leq N$, $N>M$. As such when a receiver employs CC, EFR occurs for requests of rank $f>N$, which are not available in the cache of the receiver employing the MPC scheme.

\end{document}